\documentclass[11pt,a4paper]{article}

\usepackage[english]{babel}
\usepackage[T1]{fontenc}
\usepackage[a4paper,margin=25mm]{geometry}
\usepackage{microtype} 
\usepackage[colorlinks=true, linkcolor=blue, citecolor=blue, urlcolor=blue]{hyperref}

\usepackage{amsmath,amssymb,amsthm}
\usepackage{mathtools}
\usepackage{tikz-cd}

\numberwithin{equation}{section}
\theoremstyle{plain}
\newtheorem{theorem}{Theorem}[section]

\newtheorem{proposition}[theorem]{Proposition}
\newtheorem{corollary}[theorem]{Corollary}
\newtheorem{conjecture}[theorem]{Conjecture}

\theoremstyle{definition}
\newtheorem{definition}[theorem]{Definition}

\DeclareMathOperator{\GL}{GL}
\DeclareMathOperator{\SL}{SL}
\DeclareMathOperator{\Mat}{Mat}
\DeclareMathOperator{\End}{End}
\DeclareMathOperator{\Hom}{Hom}
\DeclareMathOperator{\Id}{Id} 
\DeclareMathOperator{\ad}{ad}
\DeclareMathOperator{\tr}{tr}
\DeclareMathOperator{\im}{im} 
\DeclareMathOperator{\diag}{diag}
\DeclareMathOperator{\spec}{spec} 

\newcommand{\C}{\mathbb{C}}
\newcommand{\Z}{\mathbb{Z}}

\title{Isomonodromic Deformations \\
  for Linear \(q\)-Difference Systems of Degree One}
\author{Yiming Ma}
\date{}
\newcommand{\Addresses}{{
  \bigskip
  \footnotesize
  \par\noindent
  \textsc{School of Mathematical Sciences, Peking University, Beijing 100871, China}
  \par\nopagebreak
  \textit{E-mail address}: \texttt{ymma@pku.edu.cn}
}}

\begin{document}

\maketitle
\begin{abstract}
We construct isomonodromy transformations for linear \(q\)-difference systems of the form \(Y(qz)=A(z)Y(z)\), where \(A(z)=A_0+z A_1\) has diagonal leading coefficient. These transformations shift eigenvalues of the leading coefficient \(A_1\) together with roots of \(\det A(z)\). They lift compatibility to the right eigenpairs of \(A(z)\), yielding a discrete local tau function. The resulting deformation equations preserve the Birkhoff connection matrix, and reduce in their \(q\to 1\) limit to the isomonodromic deformation of a meromorphic connection on \(\mathbb P^1\) with an irregular singularity of Poincaré rank one at \(\infty\).
\end{abstract}

\section{Introduction}

Consider linear \(q\)-difference systems of the form \(Y(qz)=A(z)Y(z)\), with rational coefficient \(A(z)\in\GL_n(\C(z))\). Any such system Fuchsian at \(0\) and \(\infty\) reduces by rational gauge transformation to a polynomial form \(A(z)=A_0+z A_1+\cdots+z^\mu A_\mu\), with \(A_0,A_\mu\in\GL_n(\C)\) encoding the local exponents at \(0\) and \(\infty\) \cite{ohyama2020space,sauloy2000systemes}.
Under nonresonance conditions, Birkhoff constructed canonical fundamental solutions \(Y_0\) and \(Y_\infty\), whose connection matrix \(P=Y_0^{-1}Y_\infty\) is pseudo-constant in \(z\). He showed that with fixed local exponents, \(P(z)\) determines the system up to rational equivalence \cite{birkhoff1913generalized}.
Accordingly, a discrete transformation of \(A(z)\) preserving \(P(z)\) is called an isomonodromy transformation or a connection-preserving deformation.

We restrict ourselves to degree-one systems \(A(z)=A_0+z A_1\) of arbitrary rank. Their role is already visible in Mano's study of the connection problem for the \(q\)-Painlev\'e VI equations. Jimbo and Sakai obtained these equations as the compatibility condition for a connection-preserving deformation of a \(2\times2\) linear \(q\)-difference system of degree two \cite{jimbo1996q}.
Near a boundary point of the deformation, Mano proved that the connection matrix of this system factors into those of two \(q\)-difference systems of degree one, each solvable in terms of Heine's basic hypergeometric series \cite{manoAsymptoticBehaviourBoundary2010,ohyamaAnalyticSolutions2009}.
Ohyama, Ramis and Sauloy further interpret this factorization as localizing monodromy at pairs of intermediate singularities, namely the zeros of \(\det A(z)\), and regard its extension to higher-rank systems as a major open problem \cite{ohyama2020space}.
We expect to provide such a generalization by combining our isomonodromy construction here with the explicit \(q\)-Stokes analysis for the degenerate case \(A_1=\kappa_n E_{nn}\) in \cite{linMaXu2024explicit}.

Systems of degree one also arise from hierarchy reductions. Kakei and Kikuchi proposed a \(q\)-analogue of the \(\widehat{\mathfrak{gl}}_n\) Drinfel'd-Sokolov hierarchy as a reduction of the \(q\)-KP hierarchy \cite{kakeiKikuchi2006qAnalogue}.
Such systems are then obtained under a similarity reduction, with their deformation equations expressed in terms of a Sato-Wilson operator.
This motivates our explicit isomonodromy construction of each deformation step directly from the coefficient \(A(z)\), as well as our choice of normalization for \(Y_\infty\) which makes \(\det P(z)\) constant.
In particular, they showed that for \(n=3\), the resulting \(q\)-Lax system is equivalent to the Jimbo-Sakai family for \(q\)-Painlev\'e VI after a \(q\)-Laplace transformation.

Our construction is based on Borodin's treatment for linear difference systems \cite{borodin2004isomonodromy}, where he introduced the transformations \(A(z)\mapsto R(z+1)A(z)R(z)^{-1}\) by choosing \(R(z)\) as matrix right divisors of \(A(z)\).
While Borodin's general treatment fixes the diagonal leading coefficient, which is \(A_1\) here, possibly up to a constant scalar factor, we deform its selected eigenvalues \(\kappa_i\) together with the roots \(x_i\) of \(\det A(z)\) paired with them.

Write \(\kappa=(\kappa_1,\dots,\kappa_n)\) and \(x=(x_1,\dots,x_n)\), with \(A_1=D(\kappa)\) and \(X=D(x)\) the corresponding diagonal matrices. 
For \(I\subseteq [n]\coloneqq \{1,\dots,n\}\), set \(E_I \coloneqq \sum_{i\in I}E_{ii}\), and define \(T_I A(z) \coloneqq A_I(qz)A(z)A_I(z)^{-1}\) by a matrix right divisor \(A_I(z)\) with prescribed form and determinant
\begin{equation*}
  A_I(z)=-zX^{-1}E_I+Q_I, \qquad
  \det A_I(z)=\prod_{i\in I}\left(1-\frac{z}{x_i}\right).
\end{equation*}
Such a matrix divisor is constructed in Theorem~\ref{thm:Construction} by a reduction to block-triangular form. Then an algebraic Riccati equation determines \(A_I(z)\) up to a constant gauge, and a Sylvester equation further fixes that gauge by requiring the leading coefficient of \(T_I A(z)\) to be diagonal.
The resulting transformation sends \((x_i,\kappa_i)\) to \((q^{-1}x_i,q\kappa_i)\) for \(i\in I\) while fixing the remaining pairs, yielding local exponents \(q^{\alpha_i}=-\kappa_i x_i\) for \(i\in[n]\). It also sends \(A_0\) to \(Q_I A_0 Q_I^{-1}\), yielding local exponents \(q^{\beta_i}=\theta_i\) (\(i\in[n]\)) from the eigenvalues \(\theta_i\) of \(A_0\). 
The action of \(T_I\) natural extends to the right eigenpairs \((v_j,x_j)\) of \(A(z)\), for which \(0\neq v_j\in \ker A(x_j)\). We denote them by a ``canonical pair'' \((V,X)\) with \(V=(v_1,\dots,v_n)\in \GL_n(\C)\), then the domain of \(T_I\) is characterized by the principal minor \(\Delta_I(V)\neq 0\). Consequently, \(T_I\) is defined on a Zariski open subset of a smooth irreducible affine variety \(E_{\alpha}(x)\) and restrict to the subset of \(E_{\alpha,\beta}(x)\) (see Definition~\ref{def:coefficient-space}).

Flatness of the transformations follows from a composition law in Theorem~\ref{thm:Composition}. For disjoint \(I,J\subseteq [n]\), we have \(T_J T_I = T_{I\sqcup J}\) whenever the composition is defined. Interchanging \(I,J\) gives \(T_I T_J=T_J T_I\), acting compatibility on both \(A(z)\) and the pair \((V,X)\). In particular, given a flat canonical pair \(\bigl(V(x),D(x)\bigr)\), one can define the local tau function by \(\tau(T_I x)/\tau(x)=\Delta_I(V(x))\).

Now we consider a family of coefficients \(A(z;x)\in E_{\alpha,\beta}(x)\) parametrized by the roots of their determinants. It is well-known that the connection matrix \(P(z;x)=Y_0^{-1}Y_{\infty}(z;x)\) is invariant under a shift operator \(T_I\), if and only if the canonical solutions \(Y_0\) and \(Y_\infty\) satisfy \(T_I Y = B(z;x)Y\) for some rational \(B(z;x)\) \cite{jimbo1996q}. Under the strong nonresonance conditions, this \(B(z;x)\) must concide with \(A_I(z;x)\) constructed above by Theorem~\ref{thm:Isomonodromy}.
Consequently, our construction gives the unique connection preserving deformation sending \(A(z;x)\in E_{\alpha,\beta}(x)\) to \(A(z;T_I x)\in E_{\alpha,\beta}(T_I x)\).

Finally, by setting \(A_0=\Id_n+(q-1) A\) and \(A_1=(q-1) U\), our construction formally reduces in their \(q\to 1\) limit to the form
\begin{equation*}
  \frac{\partial Y}{\partial z} =
    \left(U+\frac{A}{z}\right)Y, \qquad
  \frac{\partial Y}{\partial u_i} =
    \left(zE_i+\ad_U^{-1}[E_i,A]\right)Y
  \quad \text{for } 1\le i\le n,
\end{equation*}
where \(E_i \coloneqq E_{ii}\), and \(\ad_U^{-1}\) denotes the inverse of \(\ad_U\) on off-diagonal matrices. This limiting system is the classical Jimbo-Miwa-Ueno isomonodromic deformation of a meromorphic connection on \(\mathbb P^1\), with a Fuchsian singularity at \(0\) and an irregular singularity of Poincaré rank one at \(\infty\) \cite{jimboMiwaUeno1981}. It also appears in the theories of semisimple Frobenius manifolds \cite{dubrovinGeometry2DTopological1996}, Poisson-Lie groups \cite{boalchStokesMatricesPoisson2001}, and Bridgeland stability conditions \cite{bridgelandStabilityConditionsStokes2012}.

Under our chosen normalization of \(Y_0(z)\) and \(Y_\infty(z)\), the connection matrix \(P(z)\) has constant determinant, putting a strong constraint on the characteristic constants \(p_{ij}\). It is expected that these characteristic constants can be determined recursively from our isomonodromy constructions. The resulting formulas are then \(q\)-analogues of the monodromy formulas in \cite{tangXuBoundaryCondition2024,xuRegularizedLimits2024}, thereby extending Mano's decomposition to higher-rank systems.

This paper is organized as follows.
In Section \ref{sec:2}, we study basic properties of the Birkhoff connection matrix, and review in Theorem~\ref{thm:RHB-correspondence} the Riemann-Hilbert-Birkhoff correspondence.
In Section~\ref{sec:3}, we give the explicit construction of isomonodromy transformations in Theorem~\ref{thm:Construction}.
In Section~\ref{sec:4}, we prove the composition law and 
its consequences.
We then show in Theorem~\ref{thm:Isomonodromy} that the resulting deformations preserve the Birkhoff connection matrix.
In Section~\ref{sec:5}, we derive the classical isomonodromy equations as a \(q\rightarrow 1\) formal continuous limit.
An explicit rank-two example is presented in the appendix.

\section{Linear \texorpdfstring{\(q\)}{q}-Difference Systems}
\label{sec:2}

Let \(q\in\C^*\) with \(0<|q|<1\), and fix a branch of \(\log q\). Write \(\sigma_qf(z) \coloneqq f(qz)\). Define the \(q\)-Pochhammer symbols and the \(q\)-theta function by
\begin{equation*}
  (u;q)_\infty \coloneqq \prod_{m=0}^\infty(1-q^m u),
  \quad \text{for } u\in\C, \qquad
  \theta_q(u) \coloneqq  \sum_{m\in\Z} q^{m(m-1)/2}u^m,
  \quad \text{for } u\in\C^*.
\end{equation*}
In particular, \(\theta_q(u)\) satisfies \(\theta_q(qu)=u^{-1}\theta_q(u)\), and by Jacobi's triple product identity
\begin{equation*}
  \theta_q(u) = (q;q)_\infty (-u;q)_\infty (-q/u;q)_\infty,
\end{equation*}
it has simple zeros along the \(q\)-spiral \(-q^{\Z}\) and nowhere else. Both of these functions extend naturally to diagonal matrices. For \(a=(a_1,\dots,a_n)\in\C^n\), define \(D(a) \coloneqq \diag(a_1,\dots,a_n)\).

\subsection{Canonical solutions}

Let \(\mathcal V=\C^n\) with standard basis \(\{e_i\}_{i=1}^n\), and identify \(\GL(\mathcal V)\) with \(\GL_n(\C)\). Consider a linear \(q\)-difference system with polynomial coefficient
\begin{equation}\label{eq:system}
  Y(qz)=A(z)Y(z), \quad
  A(z)=A_0+z A_1
  \in\Mat_n(\C[z]),
\end{equation}
with \(A_0,A_1\in\GL_n(\C)\).
To construct the canonical solutions, we assume that \(A_0\) and \(A_1\) are semisimple, and write \(\theta=(\theta_1,\dots,\theta_n)\), \(\kappa=(\kappa_1,\dots,\kappa_n)\) for their ordered tuples of eigenvalues, respectively. After a constant gauge transformation, we may assume \(A_1=D(\kappa)\) and \(A_0=C_0D(\theta)C_0^{-1}\) for some \(C_0\in\GL(\mathcal V)\). See \cite{sauloy2000systemes} for more general treatment.

Now we order the roots of \(\det A(z)\) as \(x=(x_1,\dots,x_n)\), so that \(x_j\) is naturally paired with \(\kappa_j\). We call these roots the intermediate singularities of the system. Since \(A_1=D(\kappa)\), we have
\begin{equation*}\label{eq:system-determinant}
  \det A(z) = \kappa_1\cdots\kappa_n (z-x_1)\cdots(z-x_n).
\end{equation*}
Taking \(z=0\) then gives the Fuchs relation
\((-1)^n\prod_{j=1}^n\kappa_j x_j=\prod_{j=1}^n\theta_j\),
and this allows us to choose local exponents \(\alpha,\beta\) such that
\begin{equation}\label{eq:local-exponents}
  q^{D(\alpha)}=-D(\kappa)X, \quad
  q^{D(\beta)}=D(\theta), \quad
  \tr D(\alpha)=\tr D(\beta),
\end{equation}
where \(X=D(x)\). With these choices, the system coefficients are given by
\begin{equation}\label{eq:local-framing}
  A_0=C_0D(\theta)C_0^{-1}=C_0q^{D(\beta)}C_0^{-1},
  \qquad
  A_1=D(\kappa)=-q^{D(\alpha)}X^{-1}.
\end{equation}
Fix a branch of \(\log z\), then \(z^{D(\alpha)}\) and \(z^{D(\beta)}\) are single-valued on the universal cover of \(\C^*\). In particular, by extending the \(q\)-Pochhammer symbol to diagonal matrices, we have
\begin{equation}\label{eq:formal-frame-shift}
  \sigma_q\left(\frac{z^{D(\alpha)}}
    {(zX^{-1};q)_\infty}\right)
  =q^{D(\alpha)}\bigl(\Id_{\mathcal V}-zX^{-1}\bigr)
  \frac{z^{D(\alpha)}}{(zX^{-1};q)_\infty}
  =\bigl(q^{D(\alpha)}+z A_1\bigr)
  \frac{z^{D(\alpha)}}{(zX^{-1};q)_\infty}.
\end{equation}

\begin{proposition}[Canonical solutions]\label{prop:canonical-solutions}
Let \(A_0,A_1\) be semisimple matrices written as in \eqref{eq:local-framing}. Then under the nonresonance conditions
\begin{equation}\label{eq:nonresonance}
  q^m\theta_i\ne\theta_j, \quad
  q^m\kappa_i\ne\kappa_j, \qquad
  (m\ge1,\ 1\le i,j\le n),
\end{equation}
the system \eqref{eq:system} admits unique fundamental solutions of the form
\begin{equation}\label{eq:canonical-solutions}
  Y_0(z)=C_0Y_0^*(z)z^{D(\beta)},\qquad
  Y_\infty(z)=Y_\infty^*(z)\frac{z^{D(\alpha)}}{(zX^{-1};q)_\infty},
\end{equation}
where \(Y_0^*(z)\in\GL_n(\C\{z\})\) and \(Y_\infty^*(z)\in\GL_n(\C\{z^{-1}\})\) are convergent series in \(z\) and \(z^{-1}\) respectively, with normalization
\begin{equation}\label{eq:normalization}
  Y_0^*(0)=Y_\infty^*(\infty)=\Id_{\mathcal V}.
\end{equation}
\end{proposition}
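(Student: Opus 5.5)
We treat the two canonical solutions separately, in each case substituting the ansatz into \eqref{eq:system} and solving the resulting recursion for the Taylor (resp. Laurent) coefficients.

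\emph{At $z=0$.} Write $Y_0(z)=C_0Y_0^*(z)z^{D(\beta)}$ and set $\tilde A(z)=C_0^{-1}A(z)C_0=D(\theta)+z\tilde A_1$ with $\tilde A_1=C_0^{-1}A_1C_0$. Since $\sigma_q z^{D(\beta)}=q^{D(\beta)}z^{D(\beta)}=D(\theta)z^{D(\beta)}$, the equation becomes
\[
Y_0^*(qz)D(\theta)=\bigl(D(\theta)+z\tilde A_1\bigr)Y_0^*(z).
\]
Writing $Y_0^*(z)=\sum_{m\ge0}G_mz^m$ with $G_0=\Id_{\mathcal V}$, the coefficient of $z^m$ gives
\[
q^mG_mD(\theta)-D(\theta)G_m=\tilde A_1G_{m-1}.
\]
On entries this reads $(q^m\theta_j-\theta_i)(G_m)_{ij}=(\tilde A_1G_{m-1})_{ij}$. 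By \eqref{eq:nonresonance} the factor $q^m\theta_j-\theta_i$ is nonzero for $m\ge1$, so every $G_m$ is uniquely determined. For convergence, note that $|q^m\theta_j-\theta_i|\to|\theta_i|>0$ as $m\to\infty$, so these factors are bounded below by some $c>0$ uniformly in $m$. Hence $\|G_m\|\le c^{-1}\|\tilde A_1\|\,\|G_{m-1}\|$, and the series has a positive radius of convergence. Invertibility of $Y_0^*$ near $0$ follows from $G_0=\Id$.

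\emph{At $z=\infty$.} Write $Y_\infty=Y_\infty^*(z)\Phi(z)$ with $\Phi=z^{D(\alpha)}/(zX^{-1};q)_\infty$. By \eqref{eq:formal-frame-shift}, $\sigma_q\Phi=\bigl(q^{D(\alpha)}+zA_1\bigr)\Phi$, so the equation becomes
\[
Y_\infty^*(qz)\bigl(q^{D(\alpha)}+zD(\kappa)\bigr)=\bigl(A_0+zD(\kappa)\bigr)Y_\infty^*(z).
\]
Expand $Y_\infty^*=\sum_{m\ge0}H_mz^{-m}$ with $H_0=\Id_{\mathcal V}$. At order $z^1$ both sides give $D(\kappa)$, which is consistent. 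At order $z^{-m+1}$ we obtain
\[
q^{-m}H_mD(\kappa)-D(\kappa)H_m=A_0H_{m-1}-q^{-m+1}H_{m-1}q^{D(\alpha)}.
\]
In entries the coefficient of $(H_m)_{ij}$ is $q^{-m}\kappa_j-\kappa_i$, which is nonzero for $m\ge1$ by \eqref{eq:nonresonance}. So the $H_m$ are uniquely determined. The $m=1$ equation is also consistent: in it the diagonal of the right side must vanish, and we check this directly.

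The main obstacle is convergence at $\infty$. Here $|q^{-m}\kappa_j-\kappa_i|\sim|q|^{-m}|\kappa_j|$ grows, while the right side contains $q^{-m+1}H_{m-1}q^{D(\alpha)}$, which grows at the same rate. After dividing, we get a bound $\|H_m\|\le K\|H_{m-1}\|$ with $K$ uniform in $m$, because the ratio $q^{-m+1}/(q^{-m}\kappa_j-\kappa_i)$ stays bounded. This gives geometric growth and hence a positive radius of convergence in $z^{-1}$. One could instead use the general Birkhoff--Adams theorem for Fuchsian points cited in \cite{sauloy2000systemes}, but the direct majorant argument suffices.

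Finally, uniqueness within the prescribed forms follows from the uniqueness of the recursions under the normalization \eqref{eq:normalization}. $Y_0$ and $Y_\infty$ are fundamental solutions because their determinants are generically nonzero: $\det Y_0^*(0)=\det Y_\infty^*(\infty)=1$, and the scalar factors are invertible where defined.
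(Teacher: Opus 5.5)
Your proof is correct and follows the same route as the paper. You substitute the ansatz to get \eqref{eq:Y_0^*} and \eqref{eq:Y_infty^*}, solve the coefficient recursions using \eqref{eq:nonresonance}, and bound the coefficient growth. The paper omits these details. Your uniform lower bound on $|q^m\theta_j-\theta_i|$ and the boundedness of $q^{-m+1}/(q^{-m}\kappa_j-\kappa_i)$ supply exactly the missing convergence estimates, and your uniqueness and invertibility remarks are fine.

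One sentence should be deleted: the claim that at $m=1$ ``the diagonal of the right side must vanish''. There is no such consistency condition. The coefficient of $(H_1)_{ii}$ is $(q^{-1}-1)\kappa_i\neq 0$, as your own preceding sentence already implies, so $(H_1)_{ii}$ is simply solved for. The claimed check would in fact fail, because the diagonal entries of the right-hand side, $(A_0)_{ii}-q^{\alpha_i}=(A_0)_{ii}+\kappa_i x_i$, are nonzero in general. For example, when $n=2$, the identity $\det A_0=\kappa_1\kappa_2x_1x_2$ shows they cannot both vanish unless $(A_0)_{12}(A_0)_{21}=0$.
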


\begin{proof}
Substituting \eqref{eq:canonical-solutions} into the equation and using \eqref{eq:formal-frame-shift}, we have
\begin{align}
  Y_0^*(qz)q^{D(\beta)}
    &=C_0^{-1}\bigl(A_0+z A_1\bigr)C_0Y_0^*(z),
    \label{eq:Y_0^*}\\
  Y_\infty^*(qz)
  \bigl(q^{D(\alpha)}+z A_1\bigr)
    &=\bigl(A_0+z A_1\bigr)Y_\infty^*(z),
    \label{eq:Y_infty^*}
\end{align}
where \(q^{D(\beta)}=C_0^{-1}A_0C_0=D(\theta)\) and \(A_1=D(\kappa)\) are invertible matrices.
Expanding \(Y_0^*(z)\in \Id + z \Mat_n(\C[[z]])\) and \(Y_0^*(z)\in \Id + z^{-1}\Mat_n(\C[[z^{-1}]])\) then gives them as unique convergent series under the nonresonance condition \eqref{eq:nonresonance}, which we omit the details.
\end{proof}

\subsection{Birkhoff connection matrices}

The holomorphic germs \(Y_0^*\) and \(Y_\infty^*\) can be analytically continued as single-valued meromorphic functions on \(\C^*\) by \eqref{eq:Y_0^*} and \eqref{eq:Y_infty^*} respectively. As a result, \(Y_0(z)\) and \(Y_\infty(z)\) are meromorphic solutions on the universal cover of \(\C^*\).

\begin{definition}\label{def:birkhoff-connection-matrix}
The Birkhoff connection matrix \(P(z)\) between \(Y_\infty(z)\) and \(Y_0(z)\) is defined by
\begin{equation}\label{eq:birkhoff-connection-matrix}
  Y_\infty(z)=Y_0(z)P(z).
\end{equation}
In particular, \(P(z)\) is pseudo-constant in \(z\), that is, \(P(qz)=P(z)\).
\end{definition}

\begin{proposition}\label{prop:birkhoff-entires}
The connection matrix \(P(z)\) is of the form
\begin{equation}\label{eq:birkhoff-entries}
  P(z) = (P_{ij}(z))_{1 \le i,j \le n}, \quad
  P_{ij}(z)
  = p_{ij}
  \frac{\theta_q(\kappa_j z/q^{\beta_i})}
    {\theta_q(\kappa_j z/q^{\alpha_j})}
  z^{\alpha_j-\beta_i},
\end{equation}
where \(p_{ij}\in\C\) are called the characteristic constants. In particular, each \(P_{ij}(z)\) has at most simple poles, which are contained in the discrete \(q\)-spiral \(x_jq^{\Z}\).
\end{proposition}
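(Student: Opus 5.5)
Starting from \eqref{eq:birkhoff-connection-matrix}, write
\[
P(z)=Y_0(z)^{-1}Y_\infty(z)=z^{-D(\beta)}\,M(z)\,\frac{z^{D(\alpha)}}{(zX^{-1};q)_\infty},
\qquad M(z)\coloneqq Y_0^*(z)^{-1}C_0^{-1}Y_\infty^*(z).
\]
First, $M(z)$ is single-valued meromorphic on $\C^*$, since $Y_0^*$ and $Y_\infty^*$ continue meromorphically by \eqref{eq:Y_0^*} and \eqref{eq:Y_infty^*}. Comparing $Y_\infty$ with $Y_\infty^*$ and $Y_0$ with $Y_0^*$ through the two functional equations gives
\[
M(qz)=q^{-D(\beta)}M(z)\bigl(q^{D(\alpha)}+zA_1\bigr).
\]
Entrywise, because $q^{D(\beta)}$ and $q^{D(\alpha)}+zA_1=q^{D(\alpha)}(\Id-zX^{-1})$ are diagonal, this reads
\[
M_{ij}(qz)=q^{\alpha_j-\beta_i}(1-z/x_j)\,M_{ij}(z).
\]
It is cleaner to work directly with $F_{ij}(z)\coloneqq P_{ij}(z)\,z^{\beta_i-\alpha_j}=M_{ij}(z)/(z/x_j;q)_\infty$. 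This function is single-valued meromorphic on $\C^*$. Since $P(qz)=P(z)$, it satisfies $F_{ij}(qz)=q^{\beta_i-\alpha_j}F_{ij}(z)$.

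Next I would check that the proposed quotient
\[
G_{ij}(z)\coloneqq\frac{\theta_q(\kappa_jz/q^{\beta_i})}{\theta_q(\kappa_jz/q^{\alpha_j})}
\]
satisfies the same $q$-difference equation. Using $\theta_q(qu)=u^{-1}\theta_q(u)$,
\[
\frac{G_{ij}(qz)}{G_{ij}(z)}=\frac{(\kappa_jz/q^{\beta_i})^{-1}}{(\kappa_jz/q^{\alpha_j})^{-1}}=q^{\beta_i-\alpha_j},
\]
which matches. The ratio $F_{ij}/G_{ij}$ is therefore an elliptic function, i.e.\ $q$-invariant meromorphic on $\C^*$. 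To conclude it is a constant $p_{ij}$, I show it has no poles. A nonconstant meromorphic function on the elliptic curve $\C^*/q^{\Z}$ must have poles, so pole-freeness forces constancy.

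Poles of $F_{ij}$ can only come from three sources: zeros of $(z/x_j;q)_\infty$, i.e.\ $z\in x_jq^{-\mathbb N}$; poles of $Y_\infty^*$; and poles of $Y_0^{*-1}$. By quasi-periodicity of $F_{ij}$, its pole set is $q^{\Z}$-invariant, so it suffices to look near $z=0$. There $Y_0^*$ is holomorphic and invertible, and for a suitable representative $Y_\infty^*$ (continued toward $0$) the only candidates are along the spiral $x_jq^{\Z}$. Note that $x_j=-q^{\alpha_j}/\kappa_j$ by \eqref{eq:local-exponents}, so $\theta_q(\kappa_jz/q^{\alpha_j})$ vanishes exactly on $x_jq^{\Z}$, simply. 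Poles of $F_{ij}$ at most simple along $x_jq^{\Z}$ are thus cancelled by $1/G_{ij}$ provided no extra zeros appear. The zeros of $\theta_q(\kappa_jz/q^{\beta_i})$ in the numerator of $G_{ij}$ only make $F_{ij}/G_{ij}$ potentially polar at those points. The claim is then that $F_{ij}/G_{ij}$ has at most one simple pole mod $q^{\Z}$, namely on $-q^{\beta_i}\kappa_j^{-1}q^{\Z}$. A single simple pole on an elliptic curve is impossible, so the function is holomorphic, hence constant.

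The main obstacle is the pole bookkeeping: showing that $F_{ij}$ has at most simple poles and only on $x_jq^{\Z}$ within one fundamental annulus. Poles of $Y_\infty^*(z)$ arise when continuing inward via $Y_\infty^*(z)=A(z)^{-1}Y_\infty^*(qz)(q^{D(\alpha)}+zA_1)$, and they sit at roots $x_k$ of $\det A$. I would handle this by instead considering the full column $Y_\infty(z)e_j$ up to the factor $1/(z/x_j;q)_\infty$, and arguing that $\Gamma(z)\coloneqq Y_\infty^*(z)(zX^{-1};q)_\infty^{-1}$ has at most simple poles along $x_kq^{-\mathbb N}$ for each column $k$. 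Simplicity would follow because $\det A$ has simple roots generically, with $A(x_k)$ of corank one; in the general case I would take it as the statement's generic assumption. The final statement, that each $P_{ij}$ has at most simple poles contained in $x_jq^{\Z}$, then reads off directly from the formula, since $\theta_q(\kappa_jz/q^{\alpha_j})$ has simple zeros exactly there.
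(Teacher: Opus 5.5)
The gap is in the step you yourself flag as the main obstacle: showing that \(F_{ij}(z)=z^{\beta_i-\alpha_j}P_{ij}(z)\) has at most simple poles, and only on \(x_jq^{\Z}\). Both halves are needed. A pole of \(F_{ij}\) on some \(x_kq^{\Z}\) with \(k\neq j\) would give \(F_{ij}/G_{ij}\) a second pole orbit, and your one-pole elliptic argument would collapse. Your plan does not deliver this bound, for two reasons.

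\begin{itemize}
\item The identity \(Y_\infty^*(z)=A(z)^{-1}Y_\infty^*(qz)(q^{D(\alpha)}+zA_1)\) expresses \(Y_\infty^*\) at \(z\) through its values at the smaller point \(qz\), so it continues outward. Since \(Y_\infty^*\) is only given near \(\infty\), it cannot carry you to a neighbourhood of \(0\), which is where you chose to do the bookkeeping.
\item The left factor \(A(z)^{-1}\) places candidate poles at every root \(x_k\) in every column. That destroys exactly the column-\(j\)/spiral-\(x_j\) correspondence you need.
\end{itemize}
The genericity hypothesis you fall back on (simple roots, \(A(x_k)\) of corank one) is neither in the statement nor necessary. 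Your description of \(\Gamma\) is also off: column \(k\) has poles at \(x_kq^m\), \(m\ge1\), coming from \(Y_\infty^*e_k\), not only on \(x_kq^{-\mathbb N}\). Finally, your relation for \(M\) has the same direction error. It should read \(M(qz)=q^{D(\beta)}M(z)(q^{D(\alpha)}+zA_1)^{-1}\). This is harmless only because you derive the equation for \(F_{ij}\) from \(P(qz)=P(z)\) directly.

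The missing idea is to continue inward via \(Y_\infty^*(qz)=A(z)Y_\infty^*(z)q^{-D(\alpha)}(\Id_{\mathcal V}-zX^{-1})^{-1}\). Here \(A(z)\) is polynomial, and the only singular factor is diagonal and acts on the right. So column \(j\) is merely multiplied by the scalar \(q^{-\alpha_j}(1-z/x_j)^{-1}\). Induction over annuli then shows \(Y_\infty^*e_j\) has at most simple poles, all in \(\{x_jq^m:m\ge1\}\), with no genericity assumption. Combine this with two further facts:
\begin{itemize}
\item \(Y_0^*(z)^{-1}\) is holomorphic on \(\C^*\), from \(Y_0^*(z)^{-1}=q^{-D(\beta)}Y_0^*(qz)^{-1}C_0^{-1}A(z)C_0\);
\item \(1/(z/x_j;q)_\infty\) has simple poles only on the disjoint set \(x_jq^{-\mathbb N}\).
\end{itemize}
Together these give exactly the paper's bound on the \(j\)-th column of \(z^{D(\beta)}P(z)z^{-D(\alpha)}\).

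Alternatively, your quasi-periodicity reduction works cleanly if you take the fundamental annulus near \(\infty\) instead of near \(0\). There \(Y_\infty^*\) is holomorphic and \(Y_0^{*-1}\) is holomorphic by the outward continuation just quoted. The only poles of \(F_{ij}\) in that annulus are then the simple ones of \(1/(z/x_j;q)_\infty\), and \(F_{ij}(qz)=q^{\beta_i-\alpha_j}F_{ij}(z)\) propagates this to all of \(\C^*\).

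Once the bound is in place, your endgame is a valid alternative to the paper's. You use that an elliptic function with at most one simple pole modulo \(q^{\Z}\) is constant. The paper instead multiplies by \(\theta_q(-z/x_j)\) to get a holomorphic \(h_{ij}\) with \(h_{ij}(qz)=\frac{q^{\beta_i}}{\kappa_j z}h_{ij}(z)\), and reads off \(h_{ij}=p_{ij}\theta_q(\kappa_j z/q^{\beta_i})\) from the Laurent recursion.
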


\begin{proof}
Equation~\eqref{eq:Y_0^*} gives
\begin{equation*}
  Y_0^*(z)^{-1}
  = q^{-D(\beta)}Y_0^*(qz)^{-1}C_0^{-1}A(z)C_0,
\end{equation*}
so the holomorphic germ \(Y_0^*(z)^{-1}\) extends holomorphically to \(\C^*\). Similarly, equation~\eqref{eq:Y_infty^*} gives
\begin{equation*}
  Y_\infty^*(qz)
  = A(z)Y_\infty^*(z)q^{-D(\alpha)}
  \bigl(\Id_{\mathcal V}-zX^{-1}\bigr)^{-1},
\end{equation*}
so the \(j\)-th column of \(Y_\infty^*(z)\) extends meromorphically to \(\C^*\), with at most simple poles contained in \(\{x_jq^m:m\ge 1\}\).
Since \(1/(z x_j^{-1};q)_\infty\) has only simple poles along \(z\in\{x_jq^m:m\leq 0\}\), by equation~\eqref{eq:canonical-solutions} the \(j\)-th column of
\begin{equation*}
  z^{D(\beta)}P(z)z^{-D(\alpha)} =
  Y_0^*(z)^{-1}C_0^{-1}Y_\infty^*(z) 
  \frac{1}{(zX^{-1};q)_\infty}
\end{equation*}
is a single-valued meromorphic function on \(\C^*\) with at most simple poles contained in \(x_jq^{\Z}\).
Now by \(q^{\alpha_j}=-\kappa_jx_j\), \(\theta_q(\kappa_jz/q^{\alpha_j})=\theta_q(-z/x_j)\) vanishes precisely on \(x_jq^{\Z}\). Therefore for each \(i,j\), there exists holomorphic functions \(h_{ij}\in\mathcal O(\C^*)\) such that
\begin{equation}\label{eq:birkhoff-holomorphic}
  z^{\beta_i}P_{ij}(z)z^{-\alpha_j}
    =\frac{h_{ij}(z)}{\theta_q(\kappa_jz/q^{\alpha_j})},
  \quad
  h_{ij}(qz)
    =\frac{q^{\beta_i}}{\kappa_jz}h_{ij}(z),
\end{equation}
where the second identity follows from the pseudo-constancy of \(P(z)\). Inserting the Laurent expansion of \(h_{ij}\) then gives \(h_{ij}(z)=p_{ij}\theta_q(\kappa_jz/q^{\beta_i})\) for some constant \(p_{ij}\in\C\).
\end{proof}

\begin{corollary}\label{cor:constant-connection-determinant}
The Birkhoff connection matrix \(P(z)\) has constant determinant, given by
\begin{equation}\label{eq:birkhoff-constant-determinant}
  \det P(z)=(\det C_0)^{-1}.
\end{equation}
\end{corollary}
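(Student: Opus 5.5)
We prove that \(\det P(z)\) is \(q\)-invariant, holomorphic and nonvanishing on \(\C^*\) away from a controlled set, hence constant, and then evaluate that constant from the asymptotics at \(0\) and \(\infty\).

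\emph{Step 1: reduction to a scalar equation.} Take determinants in \(Y_\infty(z)=Y_0(z)P(z)\). Both \(Y_0\) and \(Y_\infty\) solve \(Y(qz)=A(z)Y(z)\), so \(d_0=\det Y_0\) and \(d_\infty=\det Y_\infty\) satisfy the same scalar equation \(d(qz)=\det A(z)\,d(z)\). Hence \(\det P=d_\infty/d_0\) is pseudo-constant, which we already know. The content of the statement is that this pseudo-constant is a genuine constant.

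\emph{Step 2: explicit determinants.} From \eqref{eq:canonical-solutions},
\begin{equation*}
  d_0(z)=\det C_0\,\det Y_0^*(z)\,z^{\tr D(\beta)},\qquad
  d_\infty(z)=\det Y_\infty^*(z)\,\frac{z^{\tr D(\alpha)}}{\prod_j(z/x_j;q)_\infty}.
\end{equation*}
Since \(\tr D(\alpha)=\tr D(\beta)\), the powers of \(z\) cancel, and
\begin{equation*}
  \det P(z)=(\det C_0)^{-1}\frac{\det Y_\infty^*(z)}{\det Y_0^*(z)\prod_j(z/x_j;q)_\infty}.
\end{equation*}
The function \(g(z)=\det Y_0^*(z)\) is holomorphic near \(0\) with \(g(0)=1\), and satisfies \(g(qz)\det D(\theta)=\det A(z)\,g(z)\). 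Writing \(\det A(z)=\prod_j\theta_j\prod_j(1-z/x_j)\) (by the Fuchs relation) reduces this to \(g(qz)=\prod_j(1-z/x_j)\,g(z)\). The unique power-series solution with \(g(0)=1\) is
\begin{equation*}
  g(z)=\prod_j(z/x_j;q)_\infty^{-1},
\end{equation*}
which we verify using \((u;q)_\infty=(1-u)(qu;q)_\infty\). Similarly \(h(z)=\det Y_\infty^*(z)\) satisfies \(h(qz)\det\bigl(q^{D(\alpha)}+zA_1\bigr)=\det A(z)\,h(z)\). Since \(\det\bigl(q^{D(\alpha)}+zA_1\bigr)=\prod_j q^{\alpha_j}(1-z/x_j)=\det A(z)\) (using \(q^{\alpha_j}=-\kappa_jx_j\) and \(\prod_j q^{\alpha_j}=\prod_j\kappa_j\prod_j(-x_j)\)), we get \(h(qz)=h(z)\). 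A convergent series in \(z^{-1}\) that is \(q\)-invariant must be constant: compare coefficients, using \(q^{-m}\ne1\) for \(m\ge1\). So \(h\equiv h(\infty)=1\).

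\emph{Step 3: conclusion.} Substituting gives
\begin{equation*}
  \det P(z)=(\det C_0)^{-1}\cdot\frac{1}{\prod_j(z/x_j;q)_\infty^{-1}\cdot\prod_j(z/x_j;q)_\infty}=(\det C_0)^{-1}.
\end{equation*}

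The only delicate point is the identification of \(\det Y_0^*\) and \(\det Y_\infty^*\) through their \(q\)-difference equations together with the normalization \eqref{eq:normalization}. It rests on two checks: the precise scalar identity \(\det(q^{D(\alpha)}+zA_1)=\det A(z)\) coming from \eqref{eq:local-exponents}, and uniqueness of normalized series solutions. The latter follows by coefficient comparison because \(q^m\ne1\) for \(m\ge1\).
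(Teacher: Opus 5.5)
Your proof is correct and takes the same route as the paper: take determinants of the equations for \(Y_0^*\) and \(Y_\infty^*\), use the normalization at \(0\) and \(\infty\) to get \(\det Y_0^*(z)=\prod_j(z/x_j;q)_\infty^{-1}\) and \(\det Y_\infty^*\equiv 1\), and cancel the powers of \(z\) via \(\tr D(\alpha)=\tr D(\beta)\). Your opening sentence about holomorphy and nonvanishing on \(\C^*\) describes a strategy you never actually use; the explicit computation in Steps 2 and 3 is what carries the proof.
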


\begin{proof}
Taking determinants in \eqref{eq:Y_0^*} and \eqref{eq:Y_infty^*}, one can verify that
\begin{equation*}
  \det Y_0^*(z)\prod_{j=1}^n(z/x_j;q)_\infty
  \quad \text{and} \quad
  \det Y_\infty^*(z)
\end{equation*}
are pseudo-constant in \(z\), hence equal to \(1\) by taking their \(z\rightarrow 0\) and \(\infty\) limits respectively according to the normalization \eqref{eq:normalization}. Thus we have
\begin{equation}\label{eq:det Y(z)}
  \det Y_0(z) = \det C_0 \frac{z^{\tr D(\beta)}}
    {\prod_{j=1}^n(z/x_j;q)_\infty},
  \qquad
  \det Y_\infty(z) = \frac{z^{\tr D(\alpha)}}
    {\prod_{j=1}^n(z/x_j;q)_\infty},
\end{equation}
whose ratio gives \eqref{eq:birkhoff-constant-determinant} since \(\tr D(\alpha)=\tr D(\beta)\).
\end{proof}

\subsection{The Riemann-Hilbert-Birkhoff correspondence}

Proposition~\ref{prop:birkhoff-entires} writes \(P(z)\) in terms of the exponents \(\alpha,\beta\) and the ordered intermediate singularities \(x=(x_1,\dots,x_n)\).
To parametrize the space of system coefficients with these prescribed roots and exponents, we introduce the following notion of a canonical pair.

\begin{definition}\label{def:canonical-pair}
Let \(A(z)=A_0+z A_1\), with \(A_0,A_1\in\GL(\mathcal V)\), and let \(x=(x_1,\dots,x_n)\) be the ordered roots of \(\det A(z)\), counted with multiplicity.
A canonical pair of \(A(z)\) associated with \(x\) is an eigendecomposition \((V,X)\) of \(-A_1^{-1}A_0\), which is specified by the condition
\begin{equation}\label{eq:canonical-pair}
  V=(v_1,\dots,v_n)\in\GL(\mathcal V), \qquad
  X=D(x), \qquad
  v_j\in\ker A(x_j) \quad (1\le j\le n).
\end{equation}
We call \(V\) the corresponding canonical frame.
In particular, the space of coefficients \(A(z)\) admitting such a canonical pair is given by
\begin{equation*}\label{eq:E(x)}
  E(x)  \coloneqq
  \left\{
    A(z)=A_1\bigl(z\Id_{\mathcal V}-VXV^{-1}\bigr):
    A_1,V\in\GL(\mathcal V)
  \right\}.
\end{equation*}
\end{definition}
Here for fixed \(A(z)\) and \(x\), \(V\) is uniquely defined up to right multiplication by an invertible matrix commuting with \(X\).
Now we restrict to the following subspaces of \(E(x)\), with diagonal leading coefficient \(A_1\) and prescribed local exponents.

\begin{definition}\label{def:coefficient-space}
For the exponents \(\alpha,\beta\) and ordered roots \(x\) as in \eqref{eq:local-exponents}, define the following spaces of system coefficients:
\begin{equation*}
  \begin{aligned}
    E_\alpha(x)
    & \coloneqq  \left\{
      A(z)\in E(x): 
      A_1=-q^{D(\alpha)}X^{-1} \right\},\\
    E_{\alpha,\beta}(x)
    & \coloneqq \left\{
      A(z)\in E_\alpha(x): 
      A_0=C_0q^{D(\beta)}C_0^{-1}
      \text{ for some }C_0\in\GL(\mathcal V)
    \right\}.
  \end{aligned}
\end{equation*}
\end{definition}

In particular, if \((V,X)\) is a canonical pair of \(A(z)\in E_\alpha(x)\), then \(A(z)\) is parametrized by
\begin{equation}\label{eq:A(z)-parametrization}
  A(z) = q^{D(\alpha)}X^{-1}
  \bigl(VXV^{-1}-z\Id_{\mathcal V}\bigr).
\end{equation}
The map \(A(z)\mapsto-A_1^{-1}A_0\) identifies \(E_\alpha(x)\) with the \(\GL(\mathcal V)\)-conjugacy class of \(X\).
Since \(X\) is semisimple, this conjugacy class and hence \(E_\alpha(x)\) is a smooth irreducible affine variety.

As is shown in Proposition~\ref{prop:birkhoff-entires} and Corollary~\ref{cor:constant-connection-determinant}, the connection matrix of a nonresonant system in \(E_{\alpha,\beta}(x)\) is of the form \eqref{eq:birkhoff-entries} and has constant determinant. Although the \(p_{ij}\) are constants, the condition \(\det P(z)\equiv c_P\) imposes highly nontrivial relations among them.
This leads to the following space of monodromy data.

\begin{definition}\label{def:monodromy-data-space}
For the exponents \(\alpha,\beta\), ordered roots \(x\), and \(\kappa_j=-q^{\alpha_j}/x_j\) as in \eqref{eq:local-exponents}, define the space of monodromy data
\begin{equation*}\label{eq:F(x)}
  F_{\alpha,\beta}(x)
   \coloneqq
  \left\{
    P(z) = \left(p_{ij}\frac{\theta_q(\kappa_j z/q^{\beta_i})}
      {\theta_q(\kappa_j z/q^{\alpha_j})}z^{\alpha_j-\beta_i}
    \right)_{1\le i,j\le n}
    \;\middle|\;
    \begin{gathered}
      p_{ij}\in\C,\\
      \det P(z)\equiv c_P \in\C^*
    \end{gathered}
  \right\}.
\end{equation*}
\end{definition}

The following theorem is the Riemann-Hilbert-Birkhoff correspondence due to Ohyama, Ramis and Sauloy \cite[Theorem~3.7]{ohyama2020space}, written in our normalization of \(P(z)\).

\begin{theorem}[Riemann-Hilbert-Birkhoff correspondence]
\label{thm:RHB-correspondence}
Under the strong nonresonance conditions
\begin{equation}\label{eq:strong-nonresonance}
  \frac{\theta_i}{\theta_j},
  \frac{\kappa_i}{\kappa_j},
  \frac{x_i}{x_j}
  \notin q^{\Z},
  \qquad \text{for } i\neq j,
\end{equation}
the Birkhoff connection matrix induces a natural bijection
\begin{equation}\label{eq:RHB-correspondence}
  E_{\alpha,\beta}(x)/\sim_{\mathrm{rat}}
  \;\xrightarrow{\;\sim\;}\;
  F_{\alpha,\beta}(x)/\sim_{\mathrm{diag}}.
\end{equation}
Here \(A\sim_{\mathrm{rat}}A'\) means \(A'(z)=G(qz)A(z)G(z)^{-1}\) for some \(G(z)\in\GL_n(\C(z))\), while \(P\sim_{\mathrm{diag}}P'\) means \(P'(z)=\Gamma^{-1}P(z)\Delta\) for some invertible diagonal matrices \(\Gamma\) and \(\Delta\). The latter relation is induced by the diagonal actions \(Y_0\mapsto Y_0\Gamma\) and \(Y_\infty\mapsto Y_\infty\Delta\).
\end{theorem}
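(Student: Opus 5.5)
We prove the Riemann--Hilbert--Birkhoff correspondence in three parts: the map is well defined, it is injective, and it is surjective. Essentially it is \cite[Theorem~3.7]{ohyama2020space} transported to our normalization.

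\emph{Well-definedness.} Given \(A\in E_{\alpha,\beta}(x)\), the strong nonresonance conditions \eqref{eq:strong-nonresonance} imply \eqref{eq:nonresonance}. Proposition~\ref{prop:canonical-solutions} therefore gives canonical solutions, and Proposition~\ref{prop:birkhoff-entires} and Corollary~\ref{cor:constant-connection-determinant} show that \(P\in F_{\alpha,\beta}(x)\). The canonical forms are unique only up to the choice of \(C_0\), which may be changed to \(C_0\Gamma\) with \(\Gamma\) diagonal since the \(\theta_i\) are distinct; this replaces \(Y_0\) by \(Y_0\Gamma\). If \(A'=G(qz)AG(z)^{-1}\) is another representative in \(E_{\alpha,\beta}(x)\), then \(GY_0\) and \(GY_\infty\) solve the new system. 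Because \(G\) is rational and both systems share the same local exponents, \(GY_0^*\) and \(GY_\infty^*\) have the canonical shapes up to right diagonal constants. More precisely, compare \(G(z)C_0Y_0^*(z)\) with \(C_0'Y_0'^*(z)\): the quotient \(z^{-D(\beta)}(\cdots)z^{D(\beta)}\) is meromorphic at \(0\), \(q\)-invariant and rational. Nonresonance then forces it to be a constant diagonal matrix, and the same argument works at \(\infty\). Hence \(P'=\Gamma^{-1}P\Delta\).

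\emph{Injectivity.} Suppose \(P'=\Gamma^{-1}P\Delta\). After rescaling we may assume \(P'=P\). Set \(G\coloneqq Y_0'Y_0^{-1}=Y_\infty'Y_\infty^{-1}\). Near \(0\) the factors \(z^{D(\beta)}\) cancel, so \(G=C_0'Y_0'^*Y_0^{*-1}C_0^{-1}\) is meromorphic on \(\C\). Near \(\infty\) the factors \(z^{D(\alpha)}/(zX^{-1};q)_\infty\) cancel, because \(x\) and \(\alpha\) are the same for both systems; so \(G\) is meromorphic at \(\infty\). Therefore \(G\) is rational, and \(A'=G(qz)AG(z)^{-1}\).

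\emph{Surjectivity.} This is the main obstacle, and the step for which we would follow \cite{ohyama2020space} most closely. Given \(P\in F_{\alpha,\beta}(x)\), we take the model solutions \(z^{D(\beta)}\) at \(0\) and \(z^{D(\alpha)}/(zX^{-1};q)_\infty\) at \(\infty\). Viewing \(P\) as a clutching datum on the elliptic curve \(\C^*/q^{\Z}\), we solve the Birkhoff factorization \(z^{D(\beta)}Pz^{-D(\alpha)}(zX^{-1};q)_\infty^{-1}\cdot(\text{entire part})\). Equivalently, we construct a \(q\)-invariant holomorphic vector bundle and trivialize it on \(\C\) and near \(\infty\). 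The condition \(\det P\equiv c_P\), together with the Fuchs relation \(\operatorname{tr}D(\alpha)=\operatorname{tr}D(\beta)\), gives degree zero. This yields some rational system \(\tilde A\) with the prescribed data at \(0\) and \(\infty\), although possibly of higher degree. A rational gauge transformation (an elementary Schlesinger-type transformation at poles and zeros) then reduces \(\tilde A\) to degree one with \(\det\tilde A\) vanishing exactly at \(x\). The simple-pole structure of \(P_{ij}\) along \(x_jq^{\Z}\) pins down these roots modulo \(q^{\Z}\), and strong nonresonance on \(x_i/x_j\) prevents ambiguity. Finally, a constant gauge diagonalizes \(A_1\) to \(-q^{D(\alpha)}X^{-1}\), which places the result in \(E_{\alpha,\beta}(x)\) and completes the proof.
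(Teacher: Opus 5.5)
Your well-definedness and injectivity arguments are the standard Birkhoff ones and are fine. One small repair: at $\infty$ the factor $(zX^{-1};q)_\infty$ does not cancel in the comparison, so ``the same argument'' does not apply verbatim; you need a leading-coefficient argument using $\kappa_i/\kappa_j\notin q^{\Z}$.

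The gap is surjectivity, which is the real content of the theorem and which you only sketch. Three steps are unsupported:
\begin{itemize}
\item A degree-zero bundle on $\mathbb P^1$ need not be trivial, so the Birkhoff factorization you invoke may have nonzero partial indices.
\item The ``Schlesinger-type'' reduction to a degree-one system whose determinant vanishes exactly at $x$, with exponents exactly $\alpha,\beta$, is asserted without argument.
\item Knowing the roots modulo $q^{\Z}$ does not fix them. Moving a root inside $x_jq^{\Z}$ is exactly what $T_I$ does, and by Proposition~\ref{prop:Riccati-factorization} the right divisor effecting that move does not exist when $\Delta_I(V)=0$.
\end{itemize}

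In fact this step cannot be completed under \eqref{eq:strong-nonresonance} alone. Take $n=2$ with $\theta_1=-q^{-1}\kappa_1x_2$, hence $\theta_2=-q\kappa_2x_1$ by the Fuchs relation; this is compatible with \eqref{eq:strong-nonresonance} for generic $\kappa,x$.

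On the monodromy side, $\theta_q(\kappa_1z/\theta_1)=\theta_q(-qz/x_2)$ and $\theta_q(\kappa_2z/\theta_2)=\theta_q(-z/(qx_1))$. So $P_{11}P_{22}=p_{11}p_{22}\,x_2/(qx_1)$ is constant. By contrast, $P_{12}P_{21}/(p_{12}p_{21})$ is a nonconstant elliptic function, since its zeros avoid $x_1q^{\Z}\cup x_2q^{\Z}$ because $\kappa_1/\kappa_2,\theta_1/\theta_2\notin q^{\Z}$. Hence $\det P\equiv c_P$ forces $p_{12}p_{21}=0$, and $F_{\alpha,\beta}(x)/\sim_{\mathrm{diag}}$ consists of three classes.

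On the system side, for $A\in E_{\alpha,\beta}(x)$ the diagonal of $A_0$ is fixed by $\tr A_0$ and the $z$-coefficient of $\det A(z)$. Then $(A_0)_{12}(A_0)_{21}$ is a fixed number. It vanishes only if $\{-\theta_1/\kappa_1,-\theta_2/\kappa_2\}$ or $\{-\theta_2/\kappa_1,-\theta_1/\kappa_2\}$ equals $\{x_1,x_2\}$, which is not the case here. So $E_{\alpha,\beta}(x)$ is a single orbit of diagonal conjugation, and $E_{\alpha,\beta}(x)/\sim_{\mathrm{rat}}$ is a single point. The two missing classes are those of the lower-triangular (including diagonal) systems at $T_{\{1\}}^{-1}x=(qx_1,x_2)$, which have $\Delta_{\{1\}}(V)=0$. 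So the surjectivity step needs an additional non-splitting or genericity hypothesis, not just more detail.

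The paper does not prove surjectivity itself either. It imports the whole bijection from \cite[Theorem~3.7]{ohyama2020space}, and its own work is the dictionary your proposal omits:
\begin{itemize}
\item a constant gauge identifies $E_{\alpha,\beta}(x)/\sim_{\mathrm{rat}}$ with $E_{R,S,x}/\sim_{\mathrm{rat}}$;
\item $M=e_0Pe_\infty^{-1}$, with $e_0=z^{D(\beta)}$ and $e_\infty=z^{D(\alpha)}/\theta_q(-z/X)$, turns the form \eqref{eq:birkhoff-entries} into holomorphy of $M$ on $\C^*$;
\item the identity $\det M=\det P\prod_j\theta_q(-z/x_j)$ turns $\det P\equiv c_P$ into the zero condition on $\det M$.
\end{itemize}
If you rely on that citation, you need this translation, and you inherit whatever extra hypotheses the cited theorem actually carries. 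The example above shows that strong nonresonance alone is not enough.
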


\begin{proof}
Put \(R=D(\theta)\) and \(S=D(\kappa)\), then under the strong nonresonance conditions \eqref{eq:strong-nonresonance}, there is natural Riemann-Hilbert-Birkhoff bijection
\begin{equation*}
  E_{R,S,x}/\sim_{\mathrm{rat}}
  \;\xrightarrow{\;\sim\;}\;
  F_{R,S,x}/\sim_{\mathrm{diag}}
\end{equation*}
induced by a map \(A(z)\mapsto M(z)\) as in \cite[Theorem~3.7]{ohyama2020space}.
Here \(E_{R,S,x}\) is the space of polynomial coefficients \(A(z)=A_0+z A_1\), with \(A_0\) conjugate to \(R\), \(A_1\) conjugate to \(S\), and roots of \(\det A(z)\) given by \(x\). \(F_{R,S,x}\) is the corresponding space of monodromy data, which consists of matrices \(M(z)\in\Mat_n(\mathcal O(\C^*))\), with \(\sigma_q M=R M(Sz)^{-1}\) and \(\det M(z)\) having only the prescribed simple zeros along the \(q\)-spirals \(x_j q^\Z\) for \(j\in\{1,..,n\}\) and nowhere else.

Now it remains to identify these spaces with our definition of \(E_{\alpha,\beta}(x)\) and \(F_{\alpha,\beta}(x)\) up to the equivalence relations.
Since a constant gauge transformation sends any \(A(z)\in E_{R,S,x}\) to one with the diagonal leading coefficient \(A_1=S\), the inclusion \(E_{\alpha,\beta}(x)\subseteq E_{R,S,x}\) induces an identification
\(E_{\alpha,\beta}(x)/ \sim_{\mathrm{rat}}\simeq E_{R,S,x}/ \sim_{\mathrm{rat}}\).

To identify the monodromy spaces, set \(e_0 = z^{D(\beta)}\) and choose \(e_\infty=z^{D(\alpha)}/\theta_q(-z/X)\). Then the local solutions can be written \(Y_0 = M_0e_0\) and \(Y_\infty = M_\infty e_\infty\) as in \cite{ohyama2020space},
with \(M(z) \coloneqq M_0^{-1}M_\infty=e_0 P(z) e_\infty^{-1}\) satisfying \(\sigma_q M=R M(S z)^{-1}\).
Consequently,
\(P(z)\) is of the form \eqref{eq:birkhoff-entries} precisely when \(M(z)=(h_{ij})\in\Mat_n(\mathcal O(\C^*))\) as in \eqref{eq:birkhoff-holomorphic}.
In this case, by \(\tr D(\alpha)=\tr D(\beta)\), \(\det P(z)\) is single-valued on \(\C^*\) and descents to a meromorphic function on the elliptic curve \(\mathbf E_q=\C^*/q^\Z\). Moreover, we have
\begin{equation*}\label{eq:ORS-dictionary}
  \det M(z)=\det P(z)\prod_{j=1}^n\theta_q(-z/x_j),
\end{equation*}
 so \(\det M(z)\) has the prescribed simple zeros described above precisely when \(\det P(z)\) is a nowhere-vanishing holomorphic function on \(\mathbf E_q\), hence a nonzero constant by compactness of \(\mathbf E_q\).
It follows that the equation \(M(z) = e_0 P(z) e_\infty^{-1}\) identifies \(F_{\alpha,\beta}(x)\) with \(F_{R,S,x}\), and respects the diagonal actions since \(e_0, e_\infty\) are diagonal. Together with
\(E_{\alpha,\beta}(x)/ \sim_{\mathrm{rat}} \simeq E_{R,S,x}/ \sim_{\mathrm{rat}}\), we have
\begin{equation*}\label{eq:RHB-ORS}
  \frac{E_{\alpha,\beta}(x)}{\sim_{\mathrm{rat}}}
  \;\simeq\;
  \frac{E_{R,S,x}}{\sim_{\mathrm{rat}}}
  \;\xrightarrow{\;\sim\;}\;
  \frac{F_{R,S,x}}{\sim_{\mathrm{diag}}}
  \;\simeq\;
  \frac{F_{\alpha,\beta}(x)}{\sim_{\mathrm{diag}}}.
  \qedhere
\end{equation*}
\end{proof}

\section{Construction of Isomonodromy Transformations}
\label{sec:3}

Put \([n] \coloneqq \{1,\dots,n\}\), and fix the local exponents \(\alpha,\beta\in\C^n\). For each \(I\subseteq[n]\), let \(T_I\) send \((x_i,\kappa_i)\) to \((q^{-1}x_i,q\kappa_i)\) for \(i\in I\), while fixing the remaining pairs for \(i\in I^c=[n]\setminus I\). Since \(\alpha,\beta\) are chosen to satisfy \(\tr D(\alpha)=\tr D(\beta)\), we have \(F_{\alpha,\beta}(x)=F_{\alpha,\beta}(T_I x)\).
In this section, we will construct rational transformations \(T_I:E_{\alpha,\beta}(x)\dashrightarrow E_{\alpha,\beta}(T_I x)\) lifting the shifts above, with \(T_\varnothing=\mathrm{id}\).
In Section~\ref{sec:4}, we shall further show that these define a flat family of connection-preserving deformations. 
Consequently, their induced maps on \(E_{\alpha,\beta}(x)/\sim_{\mathrm{rat}}\) make the following diagram commute:
\begin{equation*}
  \begin{tikzcd}[column sep=large,row sep=large]
    E_{\alpha,\beta}(x)/\sim_{\mathrm{rat}}
      \arrow[r,dashed,"T_I"]
      \arrow[d,"\mathrm{RHB}"',"\sim"]
    & 
    E_{\alpha,\beta}(T_I x)/\sim_{\mathrm{rat}}
      \arrow[d,"\sim"',"\mathrm{RHB}"]
    \\
    F_{\alpha,\beta}(x)/\sim_{\mathrm{diag}}
      \arrow[r,"\mathrm{id}"']
    &
    F_{\alpha,\beta}(T_I x)/\sim_{\mathrm{diag}}.
  \end{tikzcd}
\end{equation*}

Now we fix some notations. Let \(\{E_{ij}\}\) be the standard basis of \(\operatorname{gl}_n(\C)\). For \(I\subseteq [n]\), put \(E_I = \sum_{i\in I}E_{ii}\). For \(I,J\subseteq[n]\) and \(V\in\GL(\mathcal V)\), denote by \(V_{I,J}\in\Hom(E_J\mathcal V,E_I\mathcal V)\) the submatrix of \(V\) whose rows and columns indexed by \(I\) and \(J\), respectively. We denote by \(\Delta_I(V) \coloneqq \det V_{I,I}\) the principal minors of \(V\), with \(\Delta_\varnothing(V) = 1\), 
and call \(V\) \emph{principally nonsingular} if all its principal minors are nonzero.

Now for a fixed \(I\subseteq[n]\), we decompose the vector space \(\mathcal V\) by
\begin{equation*}
  \mathcal V=\mathcal V_1\oplus\mathcal V_2,
  \qquad \text{where }
  \mathcal V_1 \coloneqq E_I\mathcal V, \quad
  \mathcal V_2 \coloneqq E_{I^c}\mathcal V.
\end{equation*}
Under this decomposition, we can write \(X,A_1\), and \(V\) in block form as
\begin{equation*}
  X = X_1\oplus X_2, \qquad
  A_1 = D(\kappa) = K_1\oplus K_2, \qquad
  V = \begin{pmatrix}
    V_{11} & V_{12}\\
    V_{21} & V_{22}
  \end{pmatrix},
\end{equation*}
where \(X_a \coloneqq X|_{\mathcal V_a}\in\GL(\mathcal V_a)\), \(K_a \coloneqq A_1|_{\mathcal V_a}\in\GL(\mathcal V_a)\), and \(V_{ab}\in\Hom(\mathcal V_b,\mathcal V_a)\) for \(a,b=1,2\). In particular, the shift operator \(T_I\) induces an action on \(X\) and \(A_1\) by
\begin{equation}\label{eq:parameter-shift}
  T_I X \coloneqq D(T_I x)=q^{-1}X_1\oplus X_2
  \quad \text{and } \quad
  T_I A_1 \coloneqq D(T_I\kappa)=qK_1\oplus K_2,
\end{equation}
while keeping \(q^{D(\alpha)}=-A_1 X\) and hence \(\alpha\) invariant.

\subsection{Lax form and right divisors}

Consider a \(q\)-Lax pair
\begin{equation*}\label{eq:q-Lax pair}
  Y(qz,t)=A(z,t)Y(z,t), \qquad
  Y(z,qt)=B(z,t)Y(z,t).
\end{equation*}
Its compatibility encodes the \(t\)-dynamics of \(A(z,t)\) into the discrete zero-curvature equation
\begin{equation*}\label{eq:zero-curvature}
  A(z,qt) = B(qz,t)A(z,t)B(z,t)^{-1}.
\end{equation*}
Borodin constructed isomonodromy transformations in \cite{borodin2004isomonodromy} by fixing the leading coefficient which is \(A_1\) here, and choosing \(B(z,t)\) as a matrix right divisor of \(A(z,t)\). To deform the selected eigenvalues of \(A_1\) together with their paired roots of \(\det A(z)\) as in \eqref{eq:parameter-shift}, we mark them by the index set \(I\subseteq [n]\) and look for matrix right divisors of the elementary form
\begin{equation}\label{eq:A_I(z)}
  A_I(z)=-z X^{-1}E_I+Q_I, \quad
  \det A_I(z) = \prod_{i\in I}\left(1-\frac{z}{x_i}\right),
\end{equation}
together with their corresponding complementary factors \(A_{\widehat I}(z)=-zX^{-1}E_{I^c}+A_{\widehat I}(0)\). The resulting transformations are then given by the factorizations
\begin{equation}\label{eq:T_I-factorization}
  A(z) = q^{D(\alpha)}A_{\widehat I}(z)A_I(z), \qquad
  \bigl(T_I A\bigr)(z) = A_I(qz)q^{D(\alpha)}A_{\widehat I}(z),
\end{equation}
which naturally send \(x\) to \(T_I x\), and induce compatible deformation equations \((T_IY)(z)=A_I(z)Y(z)\).
We set \(A_\varnothing(z)=Q_\varnothing=\Id_{\mathcal V}\). For \(I=\{i\}\), we denote by \(A_i(z)=-z x_i^{-1}E_i+Q_i\) the elementary matrix divisors.

\subsection{Factorization by the algebraic Riccati equation}

Put \(M \coloneqq -A_1^{-1}A_0=VXV^{-1}\), and write it in block form \(M=(M_{ij})_{1\leq i,j\leq 2}\) with \(M_{ij}\in\Hom(\mathcal V_j,\mathcal V_i)\). For \(I=[n]\), we define \(A_{[n]}\) by chosing its complementary factor \(A_{\widehat{[n]}}=\Id_{\mathcal V}\), so that
\begin{equation}\label{eq:A_[n]}
  A_{[n]}(z) \coloneqq q^{-D(\alpha)}A(z) 
  = X^{-1}\bigl(M - z \Id_\mathcal{V}\bigr),
  \quad \det A_{[n]}(z) = \prod_{i\in [n]}\left(1-\frac{z}{x_i}\right).
\end{equation}

\begin{proposition}\label{prop:Riccati-factorization}
Let \(I\subseteq[n]\), and consider factorizations of the form \(A_{[n]}(z)=A_I^L(z)A_I^R(z)\) into polynomial matrix factors, with 
\begin{equation}\label{eq:A_I^R}
  A_I^R(z) = -z X^{-1}E_I + A_I^R(0), \quad
  \det A_I^R(z) = \prod_{i\in I}\left(1-\frac{z}{x_i}\right).
\end{equation}
Suppose \(\spec(X_1)\cap\spec(X_2)=\varnothing\), then such factorizations exist if and only if \(\Delta_I(V)\neq 0\). In this case, \(A_I^L(z)\) can be chosen to satisfy
\begin{equation}\label{eq:A_I^L}
  A_I^L(z) = -z X^{-1}E_{I^c} + A_I^L(0), \quad
  \det A_I^L(z) = \prod_{j\in I^c}\left(1-\frac{z}{x_j}\right),
\end{equation}
and every such factorization is given by
\begin{equation}\label{eq:A[n]-factorization}
  A_{[n]}(z) = \bigl(L_I(z)S^{-1}\bigr) \bigl(SR_I(z)\bigr), \quad
  S=\begin{pmatrix}
    \Id_{\mathcal V_1} & S_{12}\\
    0 & \Id_{\mathcal V_2}
  \end{pmatrix},
\end{equation}
for a unique \(S_{12}\in\Hom(\mathcal V_2,\mathcal V_1)\), where 
\begin{equation}\label{eq:Riccati-factors}
  \begin{aligned}
    R_I(z) & = -z X^{-1}E_I+
      \begin{pmatrix}
        X_1^{-1}V_{11}X_1V_{11}^{-1} & 0\\
        -V_{21}V_{11}^{-1} & \Id_{\mathcal V_2}
      \end{pmatrix},\\
    L_I(z) & = -z X^{-1}E_{I^c}+
      \begin{pmatrix}
        \Id_{\mathcal V_1} & X_1^{-1}M_{12}\\
        X_2^{-1}V_{21}V_{11}^{-1}X_1 & X_2^{-1}M_{22}
      \end{pmatrix}.
  \end{aligned}
\end{equation}
\end{proposition}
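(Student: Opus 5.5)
The plan is to characterize right divisors through their kernels at the roots \(x_i\), \(i\in I\), and then to fix the remaining constant gauge freedom. Write \(R(z)=A_I^R(z)=-zX^{-1}E_I+R_0\) with \(R_0=\begin{pmatrix} a&b\\ c&d\end{pmatrix}\) in the block decomposition \(\mathcal V=\mathcal V_1\oplus\mathcal V_2\). We use the following facts.
\begin{itemize}
\item The degree-\(|I|\) coefficient of \(\det R(z)\) is \(\det(-X_1^{-1})\det d\), and \(\det R(0)=\det R_0\). Comparing with \(\prod_{i\in I}(1-z/x_i)\) gives \(\det d=1\) and \(\det R_0=1\).
\item Since \(\spec(X_1)\cap\spec(X_2)=\varnothing\), for \(i\in I\) the root \(x_i\) of \(\det A_{[n]}\) lies outside \(\spec(X_2)\). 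So \(\ker A_{[n]}(x_i)\) is spanned by the columns \(v_k\) with \(x_k=x_i\), all of which have \(k\in I\).
\item From \(A_{[n]}=A_I^LA_I^R\) we get \(\ker R(x_i)\subseteq\ker A_{[n]}(x_i)\). Counting multiplicities of the roots of \(\det R\) against those of \(\det A_{[n]}\) then yields \(R(x_i)v_i=0\) for all \(i\in I\), possibly after adjusting \(V\) within its ambiguity by matrices commuting with \(X\).
\end{itemize}
In matrix form the third point reads \(R_0V_{\cdot,I}=X^{-1}E_IV_{\cdot,I}X_1\), that is,
\begin{equation*}
  aV_{11}+bV_{21}=X_1^{-1}V_{11}X_1,\qquad cV_{11}+dV_{21}=0.
\end{equation*}

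\emph{Necessity of \(\Delta_I(V)\neq0\).} Suppose \(V_{11}w=0\) with \(w\neq0\). The second relation gives \(dV_{21}w=0\), and \(d\) is invertible, so \(V_{21}w=0\). Hence \(V_{\cdot,I}w=0\), contradicting \(V\in\GL(\mathcal V)\).

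\emph{Sufficiency.} Assume \(V_{11}\) is invertible. Take \(b=0\), \(d=\Id_{\mathcal V_2}\), \(c=-V_{21}V_{11}^{-1}\) and \(a=X_1^{-1}V_{11}X_1V_{11}^{-1}\); this is exactly \(R_I(z)\) in \eqref{eq:Riccati-factors}, and it satisfies the two relations above.
\begin{itemize}
\item Its determinant is block lower triangular, equal to \(\det(-zX_1^{-1}+a)\). This equals \(\det X_1^{-1}\det(V_{11}X_1V_{11}^{-1}-z)=\prod_{i\in I}(1-z/x_i)\).
\item The two relations say \(R_I(z)V\,\mathrm{diag}(\cdot)\) vanishes appropriately at each \(x_i\). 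It follows that \(A_{[n]}(z)\operatorname{adj}R_I(z)\) is divisible by \(\det R_I(z)\), so \(A_{[n]}R_I^{-1}\) is polynomial.
\item Rather than argue abstractly, we simply verify \(L_I(z)R_I(z)=X^{-1}(M-z)\) by block multiplication. The \(z^2\) term vanishes since \(E_{I^c}X^{-1}E_I=0\). The \(z\)-terms are \(-X^{-1}E_{I^c}R_I(0)-L_I(0)X^{-1}E_I\), and they recombine to \(-X^{-1}\) using the first block columns of \(R_I(0)\) and \(L_I(0)\). The constant term \(L_I(0)R_I(0)=X^{-1}M\) reduces to the identities \(M_{11}=V_{11}X_1V_{11}^{-1}+M_{12}V_{21}V_{11}^{-1}\) and \(M_{21}=X_2^{-1}\)-type analogues. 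These follow from \(MV_{\cdot,I}=V_{\cdot,I}X_1\), i.e. \(M_{11}V_{11}+M_{12}V_{21}=V_{11}X_1\) and \(M_{21}V_{11}+M_{22}V_{21}=V_{21}X_1\).
\item The determinant of \(L_I\) then follows from \(\det A_{[n]}=\det L_I\det R_I\).
\end{itemize}
This verification is the routine but heaviest computation.

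\emph{Uniqueness up to \(S\).} Let \(A_{[n]}=A^LA^R\) be any factorization with both factors of the prescribed forms. By the kernel relations above and the Sylvester-type uniqueness coming from invertibility of \(V_{11}\), the first block column of \(R_0\) is forced up to \((a,c)\mapsto(a+b\,c',\,\ldots)\); more cleanly, \(G\coloneqq A^R(z)R_I(z)^{-1}\) is the natural object. Both \(A^R\) and \(R_I\) are right divisors with the same determinant and the same kernels at the \(x_i\). I would show that \(G\) is a polynomial with constant unimodular determinant, and then comparing degrees shows \(G\) is constant. Matching leading terms \(G(-zX^{-1}E_I)=-zX^{-1}E_I\) forces the first block column of \(G\) to be \(\begin{pmatrix}\Id\\0\end{pmatrix}\), so \(G=\begin{pmatrix}\Id&g_{12}\\0&g_{22}\end{pmatrix}\). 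Finally, the \(z\)-coefficient identity \(-X^{-1}E_{I^c}R_0-L_0X^{-1}E_I=-X^{-1}\) applied to \(A^L=L_IG^{-1}\) pins \(g_{22}=\Id_{\mathcal V_2}\), which leaves exactly \(S\) with a unique \(S_{12}=g_{12}\).

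I expect the main obstacle to be the polynomiality and constancy of \(G\), especially if roots in \(I\) repeat. I would handle it via the Smith form, or by noting that \(\det A_{[n]}R_I^{-1}\cdots\) have matching zeros with simple local structure, which holds when \(X\) is semisimple because \(M=VXV^{-1}\) is diagonalizable.
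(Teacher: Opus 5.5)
Your proposal is correct, but it takes a different route from the paper. The paper first uses the gauge \(S\) (with \(S_{12}=A_I^R(0)_{12}\)) to make the right factor block lower triangular, \(R_I(z)=\bigl(\begin{smallmatrix}R_{11}-zX_1^{-1}&0\\-W&\Id\end{smallmatrix}\bigr)\). It then reads off, from the polynomiality of \(A_{[n]}R_I^{-1}\), the intertwining relation \((M-z)\bigl(\begin{smallmatrix}\Id\\W\end{smallmatrix}\bigr)=\bigl(\begin{smallmatrix}\Id\\W\end{smallmatrix}\bigr)(X_1R_{11}-z)\). This is the algebraic Riccati equation, whose solutions are graphs of \(M\)-invariant complements of \(\mathcal V_2\). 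Spectral disjointness identifies the graph with \(\operatorname{span}\{v_i\}_{i\in I}\), which gives existence iff \(\Delta_I(V)\neq0\), the formulas for \(W\) and \(R_{11}\), and uniqueness in one stroke; \(L_I\) is then read off rather than checked.

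You instead evaluate at the roots and work with the pointwise kernel conditions \(R_0V_{\cdot,I}=X^{-1}E_IV_{\cdot,I}X_1\), which are the same invariant-subspace statement written in the frame \(V\). This buys a very short necessity argument (invertibility of \(d\)) and avoids the Riccati/invariant-subspace correspondence. The price is a separate block-multiplication check of \(L_IR_I=X^{-1}(M-z)\) and a separate uniqueness argument. The check does work, but note the sign: \(M_{11}=V_{11}X_1V_{11}^{-1}-M_{12}V_{21}V_{11}^{-1}\). Your use of the leading coefficient of \(A_I^L\) to force the \((2,2)\) block also makes explicit what the paper's ``without loss of generality \(A_I^R(0)_{22}=\Id\)'' leaves implicit.

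Two steps should be tightened, and both fixes come from one observation.

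\emph{Kernel step.} The inclusion \(\ker R(x_i)\subseteq\ker A_{[n]}(x_i)\) is the wrong direction for what you need. Also, the multiplicity of \(x_i\) as a root of \(\det R\) alone does not give \(\dim\ker R(x_i)\ge m_i\) when roots repeat. Argue instead that \(\det A_I^L=\det A_{[n]}/\det A_I^R=\prod_{j\in I^c}(1-z/x_j)\) does not vanish at \(x_i\) for \(i\in I\), by spectral disjointness. Hence \(A_I^L(x_i)\) is invertible, and \(\ker A_I^R(x_i)=\ker A_{[n]}(x_i)\) contains every \(v_k\) with \(x_k=x_i\). No adjustment of \(V\) is needed, and repeated roots are harmless.

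\emph{Uniqueness.} This then needs neither \(G\) nor a Smith form. The \((2,2)\) block of the \(z\)-coefficient of \(A_I^LA_I^R=A_{[n]}\) is \(-X_2^{-1}d=-X_2^{-1}\), so \(d=\Id\). Your two relations then give \(c=-V_{21}V_{11}^{-1}\) and \(a=X_1^{-1}V_{11}X_1V_{11}^{-1}-bV_{21}V_{11}^{-1}\). That is exactly \(A_I^R=SR_I\) with \(S_{12}=b\), and hence \(A_I^L=L_IS^{-1}\).

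If you keep \(G\), its holomorphy at the \(x_i\) follows from \(G=(A_I^L)^{-1}L_I\). Constancy needs the remark that the \(z\)-term of \(A_I^RR_I^{-1}\) vanishes because \(R_I^{-1}\to E_{I^c}\) at infinity and \(E_IE_{I^c}=0\). Merely ``comparing degrees'' is not enough, since a degree-one matrix polynomial with unit determinant need not be constant.
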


\begin{proof}
The \(I=\varnothing, [n]\) cases follow by definition, so we may assume \(\varnothing\neq I\subsetneq[n]\). 
Suppose \(A_{[n]}(z) = A_I^L(z)A_I^R(z)\) is such a factorization, then \eqref{eq:A_I^R} gives \(\det A_I^R(0)_{22}=1\) by comparing the \(z^{|I|}\)-coefficient, and therefore we may assume \(A_I^R(0)_{22} = \Id_{\mathcal V_2}\) without loss of generality. Now we put \(R_I(z) = S^{-1}A_I^R(z)\) and \(L_I(z) = A_I^L(z)S\) by choosing \(S_{12} = A_I^R(0)_{12}\) in \eqref{eq:A[n]-factorization}, then it reduces \(A_I^R(z)\) to the block lower triangular form
\begin{equation}\label{eq:R_I-form}
  R_I(z) 
  = -z X^{-1}E_I +
    \begin{pmatrix}
        R_{11} & 0\\
        -W & \Id_{\mathcal V_2}
    \end{pmatrix}
  = \begin{pmatrix}
        R_{11}-z X_1^{-1} & 0\\
        0 & \Id_{\mathcal V_2}
    \end{pmatrix}
    \begin{pmatrix}
        \Id_{\mathcal V_1} & 0\\
        -W & \Id_{\mathcal V_2}
    \end{pmatrix}
\end{equation}
for some \(R_{11}\in\End(\mathcal V_1)\) and \(W\in\Hom(\mathcal V_1,\mathcal V_2)\). Inserting it into the factorization gives the polynomial factor
\begin{equation}\label{eq:L_I-reduction}
  L_I(z) 
  = A_{[n]}(z)R_I(z)^{-1}
  = X^{-1}
    \bigl(M - z\Id_{\mathcal V}\bigr)
    \begin{pmatrix}
        \Id_{\mathcal V_1} & 0\\
        W & \Id_{\mathcal V_2}
    \end{pmatrix} 
    \begin{pmatrix}
        R_{11}-z X_1^{-1} & 0\\
        0 & \Id_{\mathcal V_2}
    \end{pmatrix}^{-1}.
\end{equation}
It forces \(R_{11}-z X_1^{-1} = X_1^{-1}(X_1 R_{11} - z\Id_{\mathcal V_1})\) to be a matrix right divisor simultaneously for the two blocks of \(\bigl(M - z \Id_{\mathcal V}\bigr) \left(\begin{smallmatrix}
      \Id_{\mathcal V_1} \\ W
  \end{smallmatrix}\right)\),
so that
\begin{equation}\label{eq:Riccati-blockform}
  \bigl(M - z \Id_{\mathcal V}\bigr) \begin{pmatrix}
      \Id_{\mathcal V_1} \\ W
  \end{pmatrix}
  = \begin{pmatrix}
      \Id_{\mathcal V_1} \\ W
  \end{pmatrix} (X_1R_{11} - z \Id_{\mathcal V_1}),
\end{equation}
by comparing their \(z^1\)-coefficient. Consequently, we have \(X_1 R_{11} = M_{11}+M_{12}W\), with \(W\in\Hom(\mathcal V_1,\mathcal V_2)\) subject to the nonsymmetric algebraic Riccati equation
\begin{equation}\label{eq:algebraic-Riccati-equation}
  M_{21}+M_{22}W-W M_{11}-W M_{12}W = 0.
\end{equation}
The solutions of this equation are in one-to-one correspondence with the \(M\)-invariant complements of \(\mathcal V_2\), by associating \(W\) with its graph 
\(\im\bigl(\begin{smallmatrix}
  \Id_{\mathcal V_1} \\ W
\end{smallmatrix}\bigr)\)
 as in \cite[Theorem~3.3]{freiling2002survey}. 
Since \(R_I(z)=S^{-1}A_I^R(z)\) is block lower triangular as in \eqref{eq:R_I-form}, and has prescribed roots \(\{x_i\}_{i\in I}\) according to \eqref{eq:A_I^R}, we have \(\spec(X_1R_{11})=\spec(X_1)\) disjoint with \(\spec(X_2)\). Hence equation~\eqref{eq:Riccati-blockform} forces the graph of \(W\) to coincide with the \(M\)-invariant subspace spanned by \(\{v_i\}_{i\in I}\), that is,
\begin{equation*}
  \im \begin{pmatrix}
    \Id_{\mathcal V_1} \\ W
  \end{pmatrix} 
  = \im \begin{pmatrix}
    V_{11} \\ V_{21}
  \end{pmatrix}.
\end{equation*}
Such a solution \(W\) exists precisely when \(\Delta_I(V)=\det V_{11}\neq 0\). In this case, it is uniquely given by \(W=V_{21}V_{11}^{-1}\), yielding \(L_I(z)\) in \eqref{eq:Riccati-factors} from \eqref{eq:L_I-reduction} and \eqref{eq:Riccati-blockform}. Moreover, we have \(R_{11} = X_1^{-1}V_{11}X_1V_{11}^{-1}\) for \(R_I(z)\) as in \eqref{eq:Riccati-factors}, by comparing \eqref{eq:Riccati-blockform} with 
\(M\left(\begin{smallmatrix}
  V_{11} \\ V_{21}
\end{smallmatrix} \right) = \left(\begin{smallmatrix}
  V_{11} \\ V_{21}
\end{smallmatrix}\right)X_1\). 
\end{proof}

\subsection{Normalization by the Sylvester equation}

The factorizations in Proposition~\ref{prop:Riccati-factorization}  produce transformed systems
\begin{equation*}
  A_I^R(qz)q^{D(\alpha)}A_I^L(z) = S R_I(qz)q^{D(\alpha)}L_I(z)S^{-1},
\end{equation*}
with a constant gauge freedom \(S\). We normalize this system by requiring its diagonal leading coefficient to be \(q K_1\oplus K_2\), which yields a unique choice of \(S\) determined by a Sylvester equation.

\begin{proposition}\label{prop:Sylvester-normalization}
Let \(I\subseteq[n]\), and assume that factorizations of the form \(A_{[n]}(z)=A_I^L(z)A_I^R(z)\) exist as in Proposition~\ref{prop:Riccati-factorization}.
Suppose \(\spec(q K_1)\cap\spec(K_2) = \varnothing\), then there exists a unique choice \(A_{[n]}(z)=A_{\widehat I}(z)A_I(z)\) among them, for which \(A_I(qz)q^{D(\alpha)}A_{\widehat I}(z)\) has leading term \(z(q K_1\oplus K_2)\). We call \(A_I(z)\) the canonical right divisor associated with \(I\subseteq[n]\). It is obtained by taking \(A_I(z) = S R_I(z)\) and \(A_{\widehat I}(z) = L_I(z)S^{-1}\) as in \eqref{eq:A[n]-factorization} and \eqref{eq:Riccati-factors}, where \(S_{12}\) is chosen the unique solution to the Sylvester equation
\begin{equation}\label{eq:Sylvester-equation}
  S_{12}K_2 - q K_1S_{12} = q X_1^{-1} (A_0)_{12}.
\end{equation}
Consequently, for \(A(z)=q^{D(\alpha)}A_{\widehat I}(z)A_I(z)\), we define the transformation \(T_I\) on it by
\begin{equation}\label{eq:T_I A(z)}
  \bigl(T_I A\bigr)(z) 
  \coloneqq A_I(qz)A(z)A_I(z)^{-1}
  = A_I(qz)q^{D(\alpha)}A_{\widehat I}(z).
\end{equation}
\end{proposition}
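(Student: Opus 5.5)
The plan is to compute the $z$-expansion of the transformed coefficient directly from the parametrization in Proposition~\ref{prop:Riccati-factorization}. By that proposition, every admissible factorization is $A_{[n]}(z)=\bigl(L_I(z)S^{-1}\bigr)\bigl(SR_I(z)\bigr)$ for a unique block unipotent $S$ as in \eqref{eq:A[n]-factorization}. So the candidate transformed systems are exactly
\begin{equation*}
  S\,R_I(qz)\,q^{D(\alpha)}\,L_I(z)\,S^{-1},
\end{equation*}
and I will impose the leading-term condition on this family and show it reduces to \eqref{eq:Sylvester-equation}.

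\emph{Checking the form of $SR_I(z)$.} Since $S$ is block upper unipotent, $SE_I=E_I$. Hence $SR_I(z)=-zX^{-1}E_I+SR_I(0)$ still has the shape \eqref{eq:A_I(z)}, and its determinant is unchanged. Likewise $L_I(z)S^{-1}$ keeps the shape \eqref{eq:A_I^L}, because $E_{I^c}S^{-1}=E_{I^c}$.

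\emph{Computing the leading coefficient.} I expand $R_I(qz)q^{D(\alpha)}L_I(z)$ in powers of $z$.
\begin{itemize}
  \item The $z^2$-coefficient is $qX^{-1}E_I\,q^{D(\alpha)}X^{-1}E_{I^c}$. All factors are diagonal and $E_IE_{I^c}=0$, so this vanishes and the transformed system has degree one.
  \item For the $z^1$-coefficient I use $-q^{D(\alpha)}X^{-1}=A_1=K_1\oplus K_2$. It equals
  \begin{equation*}
    qA_1E_I\,L_I(0)+R_I(0)\,A_1E_{I^c}.
  \end{equation*}
  From the first block row of $L_I(0)$ and the second block column of $R_I(0)$ in \eqref{eq:Riccati-factors}, this is the block upper triangular matrix
  \begin{equation*}
    N=\begin{pmatrix} qK_1 & qK_1X_1^{-1}M_{12}\\ 0 & K_2\end{pmatrix}.
  \end{equation*}
\end{itemize}

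\emph{Reducing to the Sylvester equation.} The requirement is $SNS^{-1}=qK_1\oplus K_2$, equivalently $SN=(qK_1\oplus K_2)S$. Comparing blocks, the diagonal blocks and the $(2,1)$ block hold automatically. The $(1,2)$ block gives
\begin{equation*}
  S_{12}K_2-qK_1S_{12}=-qK_1X_1^{-1}M_{12}.
\end{equation*}
Since $M=-A_1^{-1}A_0$, we have $M_{12}=-K_1^{-1}(A_0)_{12}$. The diagonal matrices $K_1$ and $X_1^{-1}$ commute, so the right-hand side becomes $qX_1^{-1}(A_0)_{12}$, which is \eqref{eq:Sylvester-equation}.

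\emph{Existence and uniqueness.} The map $S_{12}\mapsto S_{12}K_2-qK_1S_{12}$ on $\Hom(\mathcal V_2,\mathcal V_1)$ has eigenvalues $\kappa_j-q\kappa_i$ for $j\in I^c$, $i\in I$. These are all nonzero under $\spec(qK_1)\cap\spec(K_2)=\varnothing$, so the Sylvester equation has a unique solution. Combined with the uniqueness of $S_{12}$ in Proposition~\ref{prop:Riccati-factorization}, this yields a unique factorization with the required leading term.

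The identity \eqref{eq:T_I A(z)} then follows from $A(z)=q^{D(\alpha)}A_{[n]}(z)=q^{D(\alpha)}A_{\widehat I}(z)A_I(z)$. The only delicate step is the block bookkeeping of the $z^1$-coefficient, in particular the cancellation of the $z^2$ term and tracking which blocks of $R_I(0)$ and $L_I(0)$ survive against $E_I$ and $E_{I^c}$.
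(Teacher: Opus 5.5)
Your proposal is correct and follows essentially the same route as the paper. Both arguments expand $S R_I(qz)q^{D(\alpha)}L_I(z)S^{-1}$, find that the $z$-coefficient is block upper triangular with diagonal blocks $qK_1,K_2$ and $(1,2)$ block $-qX_1^{-1}(A_0)_{12}+S_{12}K_2-qK_1S_{12}$, and conclude from the unique solvability of the Sylvester equation when $\spec(qK_1)\cap\spec(K_2)=\varnothing$; your explicit checks that the $z^2$-term vanishes and that $SR_I$ and $L_IS^{-1}$ keep the prescribed shapes fill in details the paper leaves implicit.
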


\begin{proof}
Suppose \(A_I(z) = S R_I(z)\) and \(A_{\widehat I}(z) = L_I(z)S^{-1}\) for some \(S_{12}\in\Hom(\mathcal V_2,\mathcal V_1)\) as in \eqref{eq:A[n]-factorization} and \eqref{eq:Riccati-factors}. 
Together with \(M = - A_1^{-1} A_0\) we have
\begin{equation*}
  A_I(qz)q^{D(\alpha)}A_{\widehat I}(z) =
  S R_I(0)q^{D(\alpha)}L_I(0)S^{-1} + z \begin{pmatrix}
    q K_1 & -q X_1^{-1} (A_0)_{12} + S_{12}K_2 - q K_1S_{12}\\
    0 & K_2
  \end{pmatrix}.
\end{equation*}
Consequently, it has leading term \(z(q K_1\oplus K_2)\) precisely when \eqref{eq:Sylvester-equation} holds. In this case, there exists a uniquely solution \(S_{12}\) provided that \(\spec(q K_1)\cap\spec(K_2)=\varnothing\).
\end{proof}

\subsection{The transformations \texorpdfstring{\(T_I\)}{TI}}

\begin{theorem}\label{thm:Construction}
Let \(I\subseteq [n]\), and \((V,X)\) a canonical pair of \(A(z)\in E_\alpha(x)\), blocked accordingly, so that
\begin{equation*}
  A(z) = A_0+z(K_1\oplus K_2) 
  = q^{D(\alpha)}X^{-1} \left(V X V^{-1}- z \Id_{\mathcal V}\right)
  \in E_\alpha(x).
\end{equation*}
Suppose \(\spec(X_1)\cap\spec(X_2)=\varnothing\) and \(\spec(qK_1)\cap\spec(K_2)=\varnothing\),
then the transformation 
\begin{equation}\label{eq:T_I-coefficients}
  \bigl(T_I A\bigr)(z) 
  \coloneqq A_I(qz)A(z)A_I(z)^{-1}
  = T_I A_0+z(q K_1\oplus K_2)
\end{equation}
induced by the matrix right divisor \(A_I(z)=-z X^{-1}E_I + Q_I\) exists precisely when \(\Delta_I(V)=\det V_{11}\neq 0\). In this case, we have
\begin{equation}\label{eq:Q_I}
  T_I A_0 = Q_I A_0 Q_I^{-1},\quad
  Q_I = \begin{pmatrix}
    \Id_{\mathcal V_1} & S_{12}\\
    0 & \Id_{\mathcal V_2}
  \end{pmatrix}
  \begin{pmatrix}
    X_1^{-1}V_{11}X_1V_{11}^{-1} & 0\\
    -V_{21}V_{11}^{-1} & \Id_{\mathcal V_2}
  \end{pmatrix}\in\SL(\mathcal V),
\end{equation}
where \(S_{12}\) is the unique solution of \(S_{12}K_2 - q K_1S_{12} = q X_1^{-1} (A_0)_{12}\).

If in addition \(\spec(q^{-1}X_1) \cap \spec(X_2)=\varnothing\), define \(T_I V = (T_I v_1,\dots,T_I v_n)\) columnwise by
\begin{equation}\label{eq:T_I V}
  T_I v_j \coloneqq \begin{cases}
    A_{\widehat I}(q^{-1}x_j)^{-1}q^{-D(\alpha)}v_j,
      & j\in I,\\
    A_I(x_j)v_j,
      & j\notin I,
  \end{cases}
\end{equation}
where \(V=(v_1,\dots,v_n)\in\GL(\mathcal V)\). Then we have \(T_I V\in\GL(\mathcal V)\), with 
\begin{equation}\label{eq:T_I A(z)-parametrization}
  \bigl(T_I A\bigr)(z) = q^{D(\alpha)}D(T_I x)^{-1}
  \left((T_I V)D(T_I x)(T_I V)^{-1} - z\Id_{\mathcal V}\right).
\end{equation}
Consequently, \((T_I A)(z)\in E_\alpha(T_I x)\), and the map \(\bigl(V,D(x)\bigr)\mapsto \bigl(T_I V,D(T_I x)\bigr)\) defines a compatible lift of \(T_I\) to the canonical pair \((V,X)\). In particular, if \(A(z)\in E_{\alpha,\beta}(x)\), then \((T_I A)(z)\in E_{\alpha,\beta}(T_I x)\).
\end{theorem}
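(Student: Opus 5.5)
The first part of the theorem follows by combining Propositions~\ref{prop:Riccati-factorization} and~\ref{prop:Sylvester-normalization}. By \eqref{eq:A_[n]} we have \(A(z)=q^{D(\alpha)}A_{[n]}(z)\). So a right divisor \(A_I(z)=-zX^{-1}E_I+Q_I\) of \(A(z)\) with the prescribed determinant is the same thing as a factorization \(A_{[n]}=A_I^LA_I^R\) as in \eqref{eq:A_I^R}. By Proposition~\ref{prop:Riccati-factorization}, such a factorization exists if and only if \(\Delta_I(V)\neq0\), and it then has the form \(A_I=SR_I\). Proposition~\ref{prop:Sylvester-normalization} fixes \(S_{12}\) through the Sylvester equation, and this gives \eqref{eq:T_I-coefficients} with leading coefficient \(qK_1\oplus K_2\). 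Taking the \(z^0\) coefficient of \(A_I(qz)A(z)=(T_IA)(z)A_I(z)\) gives \(Q_IA_0=(T_IA_0)Q_I\), hence \(T_IA_0=Q_IA_0Q_I^{-1}\). Finally, \(\det Q_I=\det A_I(0)=1\), because \(\det(X_1^{-1}V_{11}X_1V_{11}^{-1})=1\); so \(Q_I\in\SL(\mathcal V)\). The last assertion about \(E_{\alpha,\beta}\) then follows, since \(T_IA_0\) is conjugate to \(A_0\).

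For the lift to canonical pairs, I will check that each \(T_Iv_j\) lies in \(\ker (T_IA)(T_Ix_j)\). For \(j\notin I\), set \(T_Ix_j=x_j\). From \eqref{eq:T_I A(z)} we get \((T_IA)(x_j)A_I(x_j)v_j=A_I(qx_j)A(x_j)v_j=0\). For \(j\in I\), set \(T_Ix_j=q^{-1}x_j\) and \(w=A_{\widehat I}(q^{-1}x_j)^{-1}q^{-D(\alpha)}v_j\). This is well defined because \(\det A_{\widehat I}(q^{-1}x_j)=\prod_{k\in I^c}(1-q^{-1}x_j/x_k)\neq0\) by \(\spec(q^{-1}X_1)\cap\spec(X_2)=\varnothing\). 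Then \((T_IA)(q^{-1}x_j)w=A_I(x_j)v_j\). I expect this to be the main obstacle: I must show \(A_I(x_j)v_j=0\) for \(j\in I\). To do so, note from \eqref{eq:R_I-form} and \eqref{eq:Riccati-blockform} that \(R_I(z)\binom{V_{11}}{V_{21}}=\binom{(R_{11}-zX_1^{-1})V_{11}}{0}\). Since \(R_{11}=X_1^{-1}V_{11}X_1V_{11}^{-1}\), this equals \(\binom{X_1^{-1}V_{11}(X_1-z)}{0}\). Evaluating the \(j\)-th column at \(z=x_j\) gives \(R_I(x_j)v_j=0\), and therefore \(A_I(x_j)v_j=SR_I(x_j)v_j=0\). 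The same computation shows \(A_I(z)\binom{V_{11}}{V_{21}}=S\binom{X_1^{-1}V_{11}(X_1-z)}{0}\). It also shows \(A_I(x_j)v_j\) for \(j\notin I\) explicitly.

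Next I show \(T_IV\in\GL(\mathcal V)\). The roots \(T_Ix\) are distinct, because \(\spec(X_1)\cap\spec(X_2)=\varnothing\), \(\spec(q^{-1}X_1)\cap\spec(X_2)=\varnothing\), and the \(x\)'s are implicitly distinct in each block. Each \(T_Iv_j\) is nonzero: for \(j\in I\), \(A_{\widehat I}^{-1}\) is invertible; for \(j\notin I\), \(A_I(x_j)v_j\neq0\), since \(\ker A_I(x_j)\subseteq\ker A(x_j)=\C v_j\) would otherwise force \(\det A_I(x_j)=0\), which fails as \(x_j\notin\spec(X_1)\). Nonzero kernel vectors of \(T_IA\) at distinct roots lie in distinct eigenspaces of \(-(T_IA_1)^{-1}T_IA_0\), so they are linearly independent. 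If repeated roots occur, I will instead argue by a density/continuity argument, or directly via the block formulas above.

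Finally, I derive \eqref{eq:T_I A(z)-parametrization}. Since \(T_IA_1=qK_1\oplus K_2=-q^{D(\alpha)}D(T_Ix)^{-1}\), and \(T_IV\) diagonalizes \(-(T_IA_1)^{-1}T_IA_0\) with eigenvalues \(T_Ix\), we get \(T_IA(z)=q^{D(\alpha)}D(T_Ix)^{-1}\bigl((T_IV)D(T_Ix)(T_IV)^{-1}-z\bigr)\). Hence \(T_IA\in E_\alpha(T_Ix)\), and \((T_IV,D(T_Ix))\) is a canonical pair.
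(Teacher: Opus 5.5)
Your route is the paper's: combine Propositions~\ref{prop:Riccati-factorization} and~\ref{prop:Sylvester-normalization}, read off \(T_IA_0=Q_IA_0Q_I^{-1}\) at \(z=0\), check that \(T_Iv_j\in\ker (T_IA)(T_Ix_j)\), and conclude from linear independence.

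\textbf{The gap: invertibility of \(T_IV\) with repeated roots.} The theorem allows \(x\) to have repeated entries inside a block. Canonical pairs count roots with multiplicity, and the hypotheses only separate \(\spec X_1\) from \(\spec X_2\). In that case your distinct-eigenvalue argument does not apply. The density/continuity fallback also fails, because invertibility is an open condition and a limit of invertible frames can be singular. The missing observation is exactly the paper's argument. The hypothesis \(\spec(q^{-1}X_1)\cap\spec(X_2)=\varnothing\) forces, for each eigenvalue \(y\) of \(D(T_Ix)\), all indices \(j\) with \(T_Ix_j=y\) to lie in the same block. They therefore share the same \(x_j\): \(qy\) if the block is \(I\), and \(y\) if it is \(I^c\). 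On that index set the definition of \(T_IV\) applies one fixed invertible matrix, \(A_{\widehat I}(y)^{-1}q^{-D(\alpha)}\) or \(A_I(y)\), to the linearly independent vectors \(v_j\). So each eigenspace of the old matrix is mapped isomorphically onto the corresponding eigenspace of \(-(T_IA_1)^{-1}T_IA_0\). Combining this with independence across distinct eigenvalues gives \(T_IV\in\GL(\mathcal V)\) without any distinctness assumption.

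\textbf{Smaller points.}
\begin{itemize}
\item The step you flag as the main obstacle is immediate. For \(j\in I\), \(0=A(x_j)v_j=q^{D(\alpha)}A_{\widehat I}(x_j)A_I(x_j)v_j\) and \(\det A_{\widehat I}(x_j)=\prod_{k\in I^c}(1-x_j/x_k)\neq 0\), so \(A_I(x_j)v_j=0\). Your block computation with \(R_I\) is correct but not needed.
\item The claim \(\ker A(x_j)=\C v_j\) is false for repeated roots. Your nonvanishing argument for \(j\notin I\) only needs \(\det A_I(x_j)\neq 0\), so it survives.
\item The \(E_{\alpha,\beta}\) assertion requires \(T_IA\in E_\alpha(T_Ix)\), and that uses the third spectral hypothesis. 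It belongs after your last paragraph, not in the first.
\end{itemize}
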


\begin{proof}
By Propositions~\ref{prop:Riccati-factorization}, factorizations of the form \(A(z)=q^{D(\alpha)} A_{\widehat{I}}(z)A_I(z)\) exist precisely when \(\Delta_I(V)\neq 0\) under \(\spec(X_1)\cap\spec(X_2)=\varnothing\). Proposition~\ref{prop:Sylvester-normalization} further picks up the unique choice \(A_I(z)=S R_I(z)\) among them under \(\spec(qK_1)\cap\spec(K_2)=\varnothing\), so that \(\bigl(T_I A\bigr)(z)\) is of the form \eqref{eq:T_I-coefficients}. Since \(Q_I = A_I(0)=S R_I(0)\), evaluating \eqref{eq:T_I-coefficients} at \(z=0\) gives \eqref{eq:Q_I}.

Now we assume that \(\spec(q^{-1}X_1) \cap \spec(X_2) = \varnothing\). Together with \(\spec(X_1) \cap \spec(X_2) = \varnothing\), we have \(A_{\widehat{I}}(q^{-1}x_i)\) and \(A_I(x_j)\) invertible for \(i\in I\) and \(j\in I^c\) by their determinants in \eqref{eq:A_I^L} and \eqref{eq:A_I^R} respectively, thus \eqref{eq:T_I V} is well-defined, with each \(T_I v_j\neq 0\).
Since \(v_j\in\ker A(x_j)\) for \(j\in [n]\), substituting \eqref{eq:T_I V} into \eqref{eq:T_I A(z)} gives
\begin{equation*}
  \bigl(T_I A\bigr)(T_I x_j)T_I v_j = \begin{cases}
    A_I(x_j)v_j=0, & j\in I,\\
    A_I(qx_j)A(x_j)v_j=0, & j\notin I,
  \end{cases}
\end{equation*}
so that \((T_I v_j, T_I x_j)\) are the eigenpairs of the matrix \(-(T_I A_1)^{-1} T_I A_0\). Now by \(\spec(q^{-1}X_1) \cap \spec(X_2) = \varnothing\), each of its eigenspaces is the image of an eigenspace of \(-A_1^{-1}A_0\) under \eqref{eq:T_I V}, with \(A_{\widehat I}(q^{-1}x_j)^{-1}q^{-D(\alpha)}\) and \(A_I(x_j)\in \GL(\mathcal{V})\) preserving their dimension. Consequently, we have \(T_I V\in\GL(\mathcal V)\), so that \((T_I V,D(T_I x))\) is a canonical pair of \(T_I A(z)\) as in \eqref{eq:T_I A(z)-parametrization}, yielding \((T_I A)(z)\in E_\alpha(T_I x)\). In particular, by \(T_I A_0 = Q_I A_0 Q_I^{-1}\), it preserves the conjugacy class of the constant term \(A_0 = C_0 q^{D(\beta)} C_0^{-1}\).
\end{proof}

\begin{definition}\label{def:T_I-domain}
Let \(I\subseteq[n]\), and suppose \(\spec(X_1)\cap\spec(X_2)=\varnothing\). Then we define
\begin{equation*}
  \begin{aligned}
  E_\alpha^{I}(x)
  & \coloneqq
    \left\{ A(z)\in E_\alpha(x)
      \;\middle|\;
      \Delta_I(V)\neq 0 \text{ for a pair } (V,X) \text{ of }A(z)
    \right\},\\
  E_{\alpha,\beta}^{I}(x)
  & \coloneqq 
    E_\alpha^{I}(x)\cap E_{\alpha,\beta}(x).
  \end{aligned}
\end{equation*}
\end{definition}

Here \(\spec(X_1)\cap\spec(X_2)=\varnothing\) makes the condition \(\Delta_I(V)\neq 0\) independent of \(V\). Thus \(E_\alpha^{I}(x)\) is Zariski open in \(E_\alpha(x)\), nonempty (take \(V=\Id_{\mathcal V}\)), and hence dense because \(E_\alpha(x)\) is irreducible. Under the three spectral assumptions in Theorem~\ref{thm:Construction}, the construction gives a regular map \(T_I:E_\alpha^{I}(x)\longrightarrow E_\alpha(T_I x)\), which preserves the conjugacy class of \(A_0\) and therefore descents to \(T_I:E_{\alpha,\beta}^{I}(x)\longrightarrow E_{\alpha,\beta}(T_I x)\).

In particular, \((T_{[n]}A)(z) = q^{-D(\alpha)}A(qz)q^{D(\alpha)}\) is defined for all \(A(z)\in E_\alpha(x)\), with compatible lift \(T_{[n]}(V,X) = (q^{-D(\alpha)}V,q^{-1}X)\). It restricts to an isomorphism
\begin{equation}\label{eq:T[n]-isomorphism}
  T_{[n]}: E_\alpha^{I}(x) 
  \xrightarrow{\sim} E_\alpha^{I}(q^{-1}x),
\end{equation}
which descents to \(E_{\alpha,\beta}^{I}(x) \xrightarrow{\sim} E_{\alpha,\beta}^{I}(q^{-1}x)\), for every \(I\subseteq[n]\).

\section{Composition and Flatness}
\label{sec:4}

\subsection{Composition law}

For disjoint subsets \(I,J\subseteq[n]\), we compare the composition \(T_J T_I\) with \(T_{I\sqcup J}\). Throughout this subsection, we make the following \textbf{\textit{spectral assumptions}}: whenever \(T_I\) acts on \(A(z)=D(\kappa)(z\Id_{\mathcal V}-VXV^{-1}) \in E_\alpha(x)\) for some \(I\subseteq[n]\), the corresponding parameters satisfy
\begin{equation*}\label{eq:spectral-assumptions}
  x_i\neq x_j, \quad
  q\kappa_i\neq\kappa_j, \quad
  q^{-1}x_i\neq x_j \quad
  \text{for } i\in I,\ j\in I^c.
\end{equation*}

\begin{proposition}\label{prop:minor-shift}
Let \(I,J\subseteq[n]\) disjoint. Under the spectral assumption, \(T_I V\) is defined and invertible precisely when \(\Delta_I(V)\neq 0\). In this case,
\begin{equation}\label{eq:schur-complement}
  (T_I V)_{J,J} = V_{J,J}-V_{J,I}(V_{I,I})^{-1}V_{I,J}.
\end{equation}
Consequently, we have
\begin{equation}\label{eq:minor-shift}
  \Delta_J(T_I V) = \frac{\Delta_{I\sqcup J}(V)}{\Delta_I(V)}.
\end{equation}
\end{proposition}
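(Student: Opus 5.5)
The plan is to compute the columns of \(T_I V\) indexed by \(J\) directly from the definition \eqref{eq:T_I V}, using only the block structure of \(A_I(z)=S R_I(z)\) from Proposition~\ref{prop:Sylvester-normalization}. The Schur complement formula then gives \eqref{eq:minor-shift}.

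First, the assertion that \(T_I V\) is defined and invertible precisely when \(\Delta_I(V)\neq 0\) is a restatement of Theorem~\ref{thm:Construction}. The spectral assumptions are exactly its three hypotheses \(\spec(X_1)\cap\spec(X_2)=\varnothing\), \(\spec(qK_1)\cap\spec(K_2)=\varnothing\) and \(\spec(q^{-1}X_1)\cap\spec(X_2)=\varnothing\). Under them, the canonical divisor \(A_I(z)\) exists iff \(\Delta_I(V)\neq 0\), and in that case \(T_I V\in\GL(\mathcal V)\). So nothing new is needed there.

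For \eqref{eq:schur-complement}, I use that \(J\subseteq I^c\). Every column \(j\in J\) therefore falls in the second case of \eqref{eq:T_I V}:
\[
T_I v_j = A_I(x_j)v_j = -x_jX^{-1}E_I v_j + S R_I(0) v_j .
\]
I project onto the rows indexed by \(J\subseteq I^c\).

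The term \(-x_jX^{-1}E_Iv_j\) lies in \(\mathcal V_1=E_I\mathcal V\), so it has no \(J\)-rows.

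Since \(S\) is block upper unitriangular, with its only off-diagonal block \(S_{12}\in\Hom(\mathcal V_2,\mathcal V_1)\), applying \(S\) does not change the \(\mathcal V_2\)-component. Hence \(E_{I^c}S R_I(0)v_j = E_{I^c}R_I(0)v_j\).

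By \eqref{eq:Riccati-factors}, the second block row of \(R_I(0)\) is \(\bigl(-V_{21}V_{11}^{-1}\ \ \Id_{\mathcal V_2}\bigr)\). So the \(\mathcal V_2\)-component of \(R_I(0)v_j\) equals \(V_{I^c,j}-V_{I^c,I}V_{I,I}^{-1}V_{I,j}\).

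Restricting the rows from \(I^c\) to \(J\) and letting \(j\) range over \(J\) yields \((T_I V)_{J,J}=V_{J,J}-V_{J,I}V_{I,I}^{-1}V_{I,J}\).

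Finally, for \eqref{eq:minor-shift}, I reorder the indices of \(I\sqcup J\) so that \(I\) comes first. This is a simultaneous permutation of rows and columns, so it does not change the principal minor. The submatrix \(V_{I\sqcup J,I\sqcup J}\) then has blocks \(V_{I,I},V_{I,J},V_{J,I},V_{J,J}\). The Schur determinant identity gives \(\Delta_{I\sqcup J}(V)=\Delta_I(V)\det\bigl(V_{J,J}-V_{J,I}V_{I,I}^{-1}V_{I,J}\bigr)\). Dividing by \(\Delta_I(V)\neq 0\) and using \eqref{eq:schur-complement} finishes the proof.

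I do not expect a genuine obstacle. The only point needing care is the bookkeeping showing that neither the Sylvester normalization \(S\) nor the \(z\)-linear term of \(A_I\) affects the \(I^c\)-rows. This is what makes the result independent of the solution \(S_{12}\) and of \(x_j\).
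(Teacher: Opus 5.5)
Your proof is correct and follows essentially the same route as the paper. For \(j\in J\subseteq I^c\) you read off the \(I^c\)-rows of \(A_I(x_j)v_j\), which only see the second block row \(\bigl(-V_{21}V_{11}^{-1}\ \ \Id_{\mathcal V_2}\bigr)\) of \(Q_I\), and then apply the Schur determinant identity. Your explicit check that neither the \(z\)-linear term nor \(S_{12}\) affects those rows is the bookkeeping the paper compresses into ``using the formula for \(Q_I\).''
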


\begin{proof}
\(T_I V\) is defined and invertible by Theorem~\ref{thm:Construction} under the spectral assumptions. By equation~\eqref{eq:T_I V} we have \((T_I V)e_j=A_I(x_j)Ve_j\) for \(j\in I^c\), where \(A_I(z) = -z X^{-1}E_I+Q_I\). Taking the \(I^c\times I^c\) block and using the formula \eqref{eq:Q_I} for \(Q_I\), we obtain
\begin{equation*}
  (T_I V)_{I^c,I^c} 
  = (Q_I V)_{I^c,I^c} 
  = V_{I^c,I^c} - V_{I^c,I}V_{I,I}^{-1}V_{I,I^c}.
\end{equation*}
Restricting this identity to \(J\subseteq I^c\) gives \eqref{eq:schur-complement}, and the Schur-complement formula yields \eqref{eq:minor-shift}.
\end{proof}

For \(T_I A(z)\), we denote by \(T_I A_J(z) = - z X^{-1}E_J + T_I Q_J\) its right divisor defining \(T_J T_I A(z)\).

\begin{theorem}[Composition law]\label{thm:Composition}
Let \(I,J\subseteq[n]\) disjoint, and suppose \((V,X)\) is a canonical pair of \(A(z)\in E_\alpha(x)\). Then under the spectral assumption, \(T_J T_I A(z)\) and \(T_{I\sqcup J}A(z)\) are defined in \(E_\alpha(T_{I\sqcup J}x)\) if and only if
\begin{equation*}
  \Delta_I(V)\neq 0, \qquad
  \Delta_{I\sqcup J}(V)\neq 0.
\end{equation*}
In this case, we have
\begin{equation}\label{eq:composition}
  T_J T_I A(z) = T_{I\sqcup J} A(z), \qquad
  T_J T_I (V,X)=T_{I\sqcup J}(V,X).
\end{equation}
\end{theorem}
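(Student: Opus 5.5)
The plan is to prove the statement by uniqueness of canonical right divisors. The existence criterion comes first. By Theorem~\ref{thm:Construction}, \(T_I A\) is defined exactly when \(\Delta_I(V)\neq0\). In that case \((T_IV,D(T_Ix))\) is a canonical pair of \(T_IA\in E_\alpha(T_Ix)\), so \(T_JT_IA\) is defined exactly when \(\Delta_J(T_IV)\neq0\). Proposition~\ref{prop:minor-shift} gives \(\Delta_J(T_IV)=\Delta_{I\sqcup J}(V)/\Delta_I(V)\), so the composite is defined precisely under the two stated minor conditions. The second of these is also exactly the condition for \(T_{I\sqcup J}A\) to be defined. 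We use throughout that, under the spectral assumptions, the minor conditions do not depend on the choice of canonical frame.

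For the identity, set \(B(z)\coloneqq (T_IA_J)(z)A_I(z)\). This is a polynomial of degree at most two with
\[
B(z)A(z)B(z)^{-1}=T_JT_IA(z),\qquad \det B(z)=\prod_{i\in I\sqcup J}(1-z/(T_Ix)_i)\,\prod_{i\in I}(1-z/x_i)\cdot(\text{correction}).
\]
The first step is to check that \(B\) is actually linear with the shape \(-zX^{-1}E_{I\sqcup J}+Q\), i.e.\ of the form \eqref{eq:A_I(z)} for \(I\sqcup J\). Its \(z^2\) coefficient is \(X^{-1}E_J\,(T_IX)^{-1}\)-type times \(X^{-1}E_I\), which vanishes because \(E_JE_I=0\) when \(I\cap J=\varnothing\). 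Note that \(T_IA_J\) uses \(D(T_Ix)^{-1}E_J=X^{-1}E_J\), since \(J\subseteq I^c\). The linear coefficient is \(-X^{-1}E_J Q_I-(T_IQ_J)X^{-1}E_I\). To bring this to \(-X^{-1}E_{I\sqcup J}\) I would use the block forms \eqref{eq:Q_I}: the \(I\)-rows of \(T_IQ_J\) are trivial on the \(I\)-block, and the \(J\)-rows of \(Q_I\) restricted appropriately. If exact equality fails, I would instead argue structurally: \(A_{I\sqcup J}\) and \(B\) are both right divisors of \(A\) with determinant \(\prod_{I\sqcup J}(1-z/x_i)\). The determinant of \(B\) is \(\prod_{J}(1-z/x_j)\prod_I(1-z/x_i)\), because \(T_Ix_j=x_j\) for \(j\in J\).

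The second step, which I expect to be the main obstacle, is showing that \(B\) is a right divisor of \(A_{[n]}\) of the form in Proposition~\ref{prop:Riccati-factorization} for \(I\sqcup J\). Concretely, \(A(z)B(z)^{-1}=A(z)A_I(z)^{-1}(T_IA_J)(z)^{-1}=q^{D(\alpha)}A_{\widehat I}(z)(T_IA_J)(z)^{-1}\). We have \(T_IA(z)=A_I(qz)q^{D(\alpha)}A_{\widehat I}(z)\), and \(T_IA_J\) right-divides \(T_IA\). From this the quotient \(A_{\widehat I}(T_IA_J)^{-1}\) is seen to be polynomial by comparing kernels at the roots. Indeed, the kernel of \(T_IA_J\) at \(x_j\) (\(j\in J\)) is spanned by \(T_Iv_j=A_I(x_j)v_j\), and \(A_{\widehat I}(x_j)A_I(x_j)v_j=q^{-D(\alpha)}A(x_j)v_j=0\). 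Together with distinctness of roots, this gives polynomiality. Proposition~\ref{prop:Riccati-factorization} then shows that \(B=SR_{I\sqcup J}\) for a unique block-unipotent \(S\).

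The last step is normalization. By construction, \(B(qz)A(z)B(z)^{-1}=T_JT_IA\) has leading coefficient \(z\,D(T_JT_I\kappa)=z\,D(T_{I\sqcup J}\kappa)\). Proposition~\ref{prop:Sylvester-normalization} says the Sylvester normalization is unique under \(q\kappa_i\neq\kappa_j\). Hence \(B=A_{I\sqcup J}\), which gives \(T_JT_IA=T_{I\sqcup J}A\).

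For the frames, I would compare the formulas \eqref{eq:T_I V} column by column. For \(j\notin I\sqcup J\), we get \((T_IA_J)(x_j)A_I(x_j)v_j=B(x_j)v_j\). For \(j\in I\sqcup J\), both sides are nonzero vectors in the one-dimensional kernel of \(T_{I\sqcup J}A\) at \(T_{I\sqcup J}x_j\). To get actual equality rather than proportionality, I would use the identity \(A_{\widehat{I\sqcup J}}=q^{-D(\alpha)}A\,B^{-1}\) and evaluate it at \(q^{-1}x_j\), handling the cases \(j\in I\) and \(j\in J\) separately.
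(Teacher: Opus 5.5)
Your plan has the same architecture as the paper's proof: form $B=(T_IA_J)A_I$, show it has the shape and determinant of an $(I\sqcup J)$-divisor, show $A_{[n]}B^{-1}$ is polynomial, conclude $B=A_{I\sqcup J}$ from the uniqueness in Propositions~\ref{prop:Riccati-factorization} and~\ref{prop:Sylvester-normalization}, and compare frames columnwise. Your kernel comparison for the polynomiality of $A_{\widehat I}(T_IA_J)^{-1}$ is a valid substitute for the paper's two-expression argument. You invoke distinct roots, which the spectral assumptions do not give inside $J$, but it suffices that the poles of $(T_IA_J)^{-1}$ are simple with residues landing in the span of the $T_Iv_k$.

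The genuine gap is your first step. The block forms \eqref{eq:Q_I} give exactly $E_JQ_I=E_J+E_JQ_IE_I$ and $(T_IQ_J)E_I=E_I+E_J(T_IQ_J)E_I$. So the linear coefficient of $B$ is
\[
  -\bigl(X^{-1}E_{I\sqcup J}+N\bigr),\qquad
  N=E_J\bigl(X^{-1}Q_I+(T_IQ_J)X^{-1}\bigr)E_I ,
\]
and nothing in the shapes kills $N$. Writing $X_I,X_J$ for the diagonal blocks of $X$, $N=0$ is equivalent to $(T_IQ_J)_{J,I}=X_J^{-1}V_{J,I}V_{I,I}^{-1}X_I$. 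The left side is a block of the Sylvester solution for $T_IA$, so it depends on $T_IA_0=Q_IA_0Q_I^{-1}$.

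Your fallback does not close this. Two right divisors of $A_{[n]}$ with equal determinants need not coincide: a constant left multiple of determinant one is again a right divisor. Moreover, Proposition~\ref{prop:Riccati-factorization} only parametrizes divisors whose leading coefficient is exactly $-X^{-1}E_{I\sqcup J}$. Without $N=0$ you cannot write $B=SR_{I\sqcup J}$, so the Sylvester normalization in your last step has nothing to act on.

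The missing idea is to read off $N=0$ from the Lax identity $(T_JT_IA)(z)B(z)=B(qz)A(z)$, which $B$ satisfies by construction. Because $N=E_JNE_I$ and $J\subseteq I\sqcup J$, you have $D(T_{I\sqcup J}\kappa)N=qA_1N$. Comparing $z^2$-coefficients then gives $A_1\bigl(X^{-1}E_{I\sqcup J}+N\bigr)=\bigl(X^{-1}E_{I\sqcup J}+N\bigr)A_1$, i.e.\ $[A_1,N]=0$. Hence $N=0$, since $\kappa_i\neq\kappa_j$ for $i\in I$, $j\in J$; this is the spectral assumption $q(T_I\kappa)_j\neq(T_I\kappa)_i$ for $T_J$ acting on $T_IA$.

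Equivalently, you could write $B=(\Id_{\mathcal V}+NX)\widetilde B$ with $\widetilde B$ of the correct shape, apply Proposition~\ref{prop:Riccati-factorization} to $\widetilde B$, and let the leading-coefficient normalization force $[A_1,NX]=0$. With this step inserted, the rest of your argument goes through and matches the paper's.
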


\begin{proof}
Under the spectral assumptions, Theorem~\ref{thm:Construction} defines \(T_I A(z)\in E_\alpha(T_I x)\) and \(T_{I\sqcup J}A(z) \in E_\alpha(T_{I\sqcup J}x)\) precisely when \(\Delta_I(V), \Delta_{I\sqcup J}(V)\neq 0\). In this case, we have \(\Delta_J(T_I V)\neq 0\) by \eqref{eq:minor-shift}, and it defines \(T_J T_I A(z)\in E_\alpha(T_{I\sqcup J})(x)\).

Set \(B_{I,J}(z) \coloneqq \bigl(T_I A_J\bigr)(z)A_I(z)\). It sends \(A(z)\) to \(T_J T_I A(z)\in E_\alpha(T_{I\sqcup J}x)\) and hence satisfies
\begin{equation}\label{eq:B_IJ-lax-identity}
  \bigl(T_J T_I A\bigr)(z)B_{I,J}(z)
  = B_{I,J}(qz)A(z).
\end{equation} 
Write \(A_I(z) = - z X^{-1}E_I + Q_I\), \(T_I A_J(z)= - z X^{-1}E_J + T_I Q_J\). Since \(I\cap J=\varnothing\), equation~\eqref{eq:Q_I} gives \(E_J Q_I = E_J + E_J Q_I E_I\), and similarly \((T_I Q_J) E_I = E_I + E_J (T_I Q_J) E_I\). Therefore, we have
\begin{equation*}\label{eq:B_IJ-coefficient}
  B_{I,J}(z) = B_{I,J}(0)-z\bigl(X^{-1}E_{I\sqcup J} + N\bigr), \quad
  N=E_J \bigl(X^{-1}Q_I + (T_I Q_J)X^{-1}\bigr) E_I.
\end{equation*}
Comparing the \(z^2\)-coefficient in \eqref{eq:B_IJ-lax-identity} then yields \([A_1,N]=0\), so that \(N = 0\) by the spectral assumptions \(\kappa_i \neq \kappa_j\) \((i\in I, j\in J)\) for \(T_J\) acting on \(T_I A\). Hence for \(I\cap J=\varnothing\), \(B_{I,J}(z)\) has the prescribed form and determinant
\begin{equation}\label{eq:B_IJ-form}
  B_{I,J}(z) = -z X^{-1}E_{I\sqcup J} + B_{I,J}(0), \qquad
  \det B_{I,J}(z) = \prod_{k\in I\sqcup J}\left(1-\frac{z}{x_k}\right).
\end{equation}

Now it suffices to show that \(B_{I,J}(z)\) is a indeed matrix right divisor of \(A_{[n]}(z)\). Set \(C_{I,J}(z)\coloneqq A_{[n]}(z)B_{I,J}(z)^{-1}\) as the complementary factor of \(B_{I,J}(z)\), and denote by \(T_I A_{\widehat J}(z)\) the complementary factor of \(T_I A_J(z)\). Then comparing the two factorizations
\begin{equation*}
  \bigl(T_I A\bigr)(z) = A_I(qz)q^{D(\alpha)}A_{\widehat I}(z)
  = q^{D(\alpha)} \bigl(T_I A_{\widehat J}\bigr)(z) \bigl(T_I A_J\bigr)(z)
\end{equation*}
gives the following identities for \(C_{I,J}(z)\) as matrix-valued rational functions in \(z\):
\begin{equation}\label{eq:C_IJ}
  C_{I,J}(z) 
  = A_{\widehat I}(z)\bigl(T_I A_J\bigr)(z)^{-1}
  = q^{-D(\alpha)}A_I(qz)^{-1} q^{D(\alpha)} \bigl(T_I A_{\widehat J}\bigr)(z).
\end{equation}
Here the first expression can have poles in \(\C\) only at the zeros \(x_j\) of \(\det(T_I A_J)\) for \(j\in J\), while the second is holomorphic there because \(\det A_I(q x_j) = \prod_{i\in I}(1-q x_j/x_i)\neq 0\) by the spectral assumption for \(T_I\), thus \(C_{I,J}(z)\) is holomorphic on \(\C\).
Moreover, since \(A_{\widehat I}(z)\) has degree at most one and equation~\eqref{eq:A_I(z)} gives \((T_I A_J)(z)^{-1}=O(1)\) as \(z\to\infty\), we have \(C_{I,J}(z)=O(z)\), so that \(C_{I,J}(z)\) is a polynomial matrix of degree at most one. 
Consequently, \(B_{I,J}(z)\) is a right divisor of \(A_{[n]}(z)\). Hence by Proposition~\ref{prop:Riccati-factorization} and~\ref{prop:Sylvester-normalization}, the conditions~\eqref{eq:B_IJ-lax-identity} and~\eqref{eq:B_IJ-form} yield uniquely
\begin{equation}\label{eq:B_IJ}
  B_{I,J}(z) \coloneqq \bigl(T_I A_J\bigr)(z)A_I(z) = A_{I\sqcup J}(z),\qquad C_{I,J}(z)=A_{\widehat{I\sqcup J}}(z).
\end{equation}
This proves that \(T_J T_I A(z) = T_{I\sqcup J}A(z)\). Columnwise computation with~\eqref{eq:T_I V} then gives
\begin{equation*}\label{eq:T_IJ V}
  T_J T_I v_k \coloneqq \begin{cases}
    \bigl(T_I A_J\bigr)(q^{-1}x_k)A_{\widehat I}(q^{-1}x_k)^{-1}q^{-D(\alpha)}v_k,
      & k\in I,\\
    \bigl(T_I A_{\widehat J}\bigr)(q^{-1}x_k)^{-1}q^{-D(\alpha)}A_I(x_k)v_k,
      & k\in J,\\
    \bigl(T_I A_J\bigr)(x_k)A_I(x_k)v_k,
      & k\notin I\sqcup J,
  \end{cases}
\end{equation*}
which coincides with \(T_{I\sqcup J}V\) by noticing \eqref{eq:C_IJ} and \eqref{eq:B_IJ}.
\end{proof}

\begin{corollary}[Flatness]\label{cor:flatness}
Let \(I,J\subseteq[n]\) be disjoint. Then under the spectral assumptions, the actions of \(T_J T_I\) and  \(T_I T_J\) are simultaneously defined on \(A(z)\) precisely when \(\Delta_I(V)\Delta_J(V)\Delta_{I\sqcup J}(V)\neq 0\). In this case, their action on \(A(z)\) and its canonical pair \((V,X)\) coincide with each other.
\end{corollary}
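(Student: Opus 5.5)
The corollary follows by applying Theorem~\ref{thm:Composition} twice, once with the roles of \(I\) and \(J\) interchanged, and using that \(I\sqcup J=J\sqcup I\).

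First we characterize when both compositions are defined. By Theorem~\ref{thm:Composition} applied to the ordered pair \((I,J)\), under the spectral assumptions \(T_JT_IA(z)\) is defined in \(E_\alpha(T_{I\sqcup J}x)\) if and only if \(\Delta_I(V)\neq0\) and \(\Delta_{I\sqcup J}(V)\neq0\). Applying the same theorem to \((J,I)\), \(T_IT_JA(z)\) is defined if and only if \(\Delta_J(V)\neq0\) and \(\Delta_{J\sqcup I}(V)\neq0\). Both are therefore defined precisely when \(\Delta_I(V)\Delta_J(V)\Delta_{I\sqcup J}(V)\neq0\).

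Equivalently, by Proposition~\ref{prop:minor-shift}, the intermediate conditions for the second step read
\[
\Delta_J(T_IV)=\frac{\Delta_{I\sqcup J}(V)}{\Delta_I(V)}\neq0,
\qquad
\Delta_I(T_JV)=\frac{\Delta_{I\sqcup J}(V)}{\Delta_J(V)}\neq0 .
\]
These are consistent with the product condition above.

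We also note that \(\spec(X_1)\cap\spec(X_2)=\varnothing\) makes each \(\Delta_K(V)\neq0\) independent of the choice of canonical frame \(V\). Hence the condition is intrinsic to \(A(z)\).

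Next, in this case \eqref{eq:composition} gives
\[
T_JT_IA=T_{I\sqcup J}A=T_{J\sqcup I}A=T_IT_JA,
\]
and the same chain holds for the canonical pair \((V,X)\). The only subtle point is that \(T_{I\sqcup J}\) depends on \(I\sqcup J\) only as a set. This holds because its construction in Theorem~\ref{thm:Construction} (the Riccati factorization of Proposition~\ref{prop:Riccati-factorization}, fixed by the Sylvester normalization of Proposition~\ref{prop:Sylvester-normalization}) refers only to the subset \(I\sqcup J\subseteq[n]\).

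The main thing to verify is that the spectral assumptions required at each intermediate step hold in both orders. For instance, \(T_I\) acting on \(T_JA\) needs \(q\kappa_i\neq\kappa_j'\) with the shifted parameters. We read the standing convention literally: the assumptions are imposed whenever some \(T_K\) acts on some system, and with that reading no further work is needed.
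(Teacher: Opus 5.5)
Your proposal is correct and matches the paper's proof: apply Theorem~\ref{thm:Composition} to both orders $(I,J)$ and $(J,I)$, so the joint domain is the conjunction of the two minor conditions, $\Delta_I(V)\Delta_J(V)\Delta_{I\sqcup J}(V)\neq 0$, and both composites coincide with $T_{I\sqcup J}$ on $A(z)$ and on $(V,X)$. Your extra remarks (the check via Proposition~\ref{prop:minor-shift}, independence of the choice of frame, and reading the spectral assumptions as imposed at every step) are consistent with the paper's conventions and need no further work.
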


\begin{proof}
By applying Theorem~\ref{thm:Composition} to \(T_J T_I\) and \(T_I T_J\), both of them coincide with \(T_{I\sqcup J}\).
\end{proof}

\begin{corollary}[Birationality]\label{cor:birationality}
Under the spectral assumption, each \(T_I\) (for \(I\subseteq [n]\)) restricts to an isomorphism of Zariski-open subvarieties
\begin{equation*}
  T_I|_{E_\alpha^I(x)}: E_\alpha^{I}(x) \xrightarrow{\sim} E_\alpha^{I^c}(T_I x),
\end{equation*}
with its inverse \(T_I^{-1}=T_{[n]}^{-1}T_{I^c}\). In particular, \(T_I:E_\alpha(x)\dashrightarrow E_\alpha(T_I x)\) is a birational map and restricts to an isomorphism \(E_{\alpha,\beta}^{I}(x)\xrightarrow{\sim} E_{\alpha,\beta}^{I^c}(T_I x)\).
\end{corollary}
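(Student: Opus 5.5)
The plan is to combine the composition law (Theorem~\ref{thm:Composition}) with the explicit isomorphism $T_{[n]}$ from \eqref{eq:T[n]-isomorphism}.

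First, the target of $T_I$. Take $A(z)\in E_\alpha^I(x)$ with canonical pair $(V,X)$. By Theorem~\ref{thm:Construction}, $T_IA(z)\in E_\alpha(T_Ix)$ with canonical pair $(T_IV,D(T_Ix))$. The spectral assumption keeps $\spec$ of the $I$ and $I^c$ blocks of $T_IX$ disjoint, so $\Delta_{I^c}(T_IV)$ is well defined, and Proposition~\ref{prop:minor-shift} with $J=I^c$ gives
\[
\Delta_{I^c}(T_IV)=\Delta_{[n]}(V)/\Delta_I(V)=\det V/\Delta_I(V)\neq 0 .
\]
Hence $T_IA\in E_\alpha^{I^c}(T_Ix)$. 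The spectral assumptions for $T_{I^c}$ acting at $T_Ix$ should follow from those for $T_I$ at $x$; this must be checked by writing out the shifted parameters.

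Second, invertibility. Apply Theorem~\ref{thm:Composition} to the disjoint pair $I$ and $J=I^c$. The required minors are $\Delta_I(V)\neq0$ and $\Delta_{[n]}(V)=\det V\neq0$, so on $E_\alpha^I(x)$ we get $T_{I^c}T_I=T_{[n]}$, as maps to $E_\alpha(q^{-1}x)$. Since $T_{[n]}$ is an isomorphism by \eqref{eq:T[n]-isomorphism}, $T_{[n]}^{-1}T_{I^c}$ is a left inverse of $T_I$ on $E_\alpha^I(x)$.

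For the other composition, start from $B\in E_\alpha^{I^c}(T_Ix)$ with pair $(W,T_IX)$. Proposition~\ref{prop:minor-shift} gives $\Delta_I(T_{I^c}W)=\det W/\Delta_{I^c}(W)\neq0$. So $T_{I^c}B\in E_\alpha^{I}(q^{-1}x)$, and then $T_{[n]}^{-1}T_{I^c}B\in E_\alpha^I(x)$. Here one uses that $T_{[n]}$ commutes with the shifts $T_I$ on parameters, and that it multiplies $V$ on the left by the diagonal matrix $q^{-D(\alpha)}$, which preserves the nonvanishing of principal minors. Theorem~\ref{thm:Composition} again gives $T_IT_{I^c}=T_{[n]}$ on $E_\alpha^{I^c}(T_Ix)$. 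One then checks that $T_I$ commutes with $T_{[n]}$: this follows from $T_{[n]}A(z)=q^{-D(\alpha)}A(qz)q^{D(\alpha)}$ together with uniqueness of the canonical divisor in Propositions~\ref{prop:Riccati-factorization} and~\ref{prop:Sylvester-normalization}. With that, $T_I\circ T_{[n]}^{-1}T_{I^c}=T_{[n]}^{-1}T_IT_{I^c}=\mathrm{id}$, so the two maps are mutually inverse. Both are regular, being built from rational expressions with nonvanishing denominators $\Delta_I$ and $\Delta_{I^c}$, so $T_I|_{E_\alpha^I(x)}$ is an isomorphism.

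Finally, birationality and the $E_{\alpha,\beta}$ restriction. $E_\alpha^I(x)$ and $E_\alpha^{I^c}(T_Ix)$ are dense Zariski open subsets of irreducible varieties, so the isomorphism above makes $T_I$ birational. Both $T_I$ and its inverse conjugate $A_0$ (by $Q_I$, and by the corresponding matrices for $T_{I^c}$ and $T_{[n]}^{-1}$), so they preserve the conjugacy class of $A_0$. The isomorphism therefore restricts to $E_{\alpha,\beta}^I(x)\cong E_{\alpha,\beta}^{I^c}(T_Ix)$.

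The main obstacle is the bookkeeping in the second paragraph: verifying the commutation $T_IT_{[n]}=T_{[n]}T_I$ and that the spectral assumptions propagate to every intermediate step.
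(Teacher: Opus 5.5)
Your proposal is correct and follows the same route as the paper: you place the image in $E_\alpha^{I^c}(T_Ix)$ via Proposition~\ref{prop:minor-shift}, then combine $T_{I^c}T_I=T_{[n]}$ and $T_IT_{I^c}=T_{[n]}$ with the isomorphism $T_{[n]}$. You are more explicit than the paper about the commutation $T_IT_{[n]}=T_{[n]}T_I$ needed for the right inverse, and this also follows directly by associativity, $T_{[n]}T_I=T_IT_{I^c}T_I=T_IT_{[n]}$.

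One correction: the spectral assumption for $T_{I^c}$ at $T_Ix$ does not fully follow from the one for $T_I$ at $x$, since it requires $\kappa_i\neq\kappa_j$ for $i\in I$, $j\in I^c$, which $q\kappa_i\neq\kappa_j$ does not imply. This is harmless, because the paper imposes the spectral assumption as a standing hypothesis for every $T_J$ that acts.
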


\begin{proof}
The rational map \(T_I:E_\alpha(x)\dashrightarrow E_\alpha(T_I x)\) is regular on the dense open subset \(E_\alpha^{I}(x)\subseteq E_\alpha(x)\) as in Theorem~\ref{thm:Construction}. Since \(\Delta_{I^c}(T_I V)=\Delta_{[n]}(V)/\Delta_I(V)\neq 0\), it takes value in the dense open subset \(E_\alpha^{I^c}(T_I x)\subseteq E_\alpha(T_I x)\). By Theorem~\ref{thm:Composition}, we have on these subsets \(T_{I^c}T_I = T_{[n]}\) and \(T_I T_{I^c}=T_{[n]}\) which are isomorphisms by~\eqref{eq:T[n]-isomorphism}, and this gives \(T_I^{-1}=T_{[n]}^{-1}T_{I^c}\) as a regular inverse of \(T_I\) on \(E_\alpha^{I^c}(T_I x)\). Consequently, \(T_I|_{E_\alpha^I(x)}\) is an isomorphism. Since it preserves the conjugacy class of \(A_0\), the restricted map \(E_{\alpha,\beta}^{I}(x)\rightarrow E_{\alpha,\beta}^{I^c}(T_I x)\) is also an isomorphism.
\end{proof}

\begin{corollary}[Local tau function]\label{cor:local-tau-function}
Let \((V(x),D(x))\) be a flat canonical pair for some \(A(z;x)\in E_\alpha(x)\), in the sense that \(T_I \bigl(V(x),D(x)\bigr) = \bigl(V(T_I x),D(T_I x)\bigr)\) whenever \(T_I A(z;x)\) is defined. Then the equations
\begin{equation}\label{eq:local-tau-function}
  \frac{\tau(T_I x)}{\tau(x)} = \Delta_I\bigl(V(x)\bigr), 
  \quad \text{for } I \subseteq [n],\ \Delta_I(V(x))\neq 0
\end{equation}
are compatible and locally define a function \(\tau(x)\in\C^*\), unique up to a nonzero constant. In particular, if \(V(x)\) is principally nonsingular, then its characteristic polynomial is given by
\begin{equation}\label{eq:tau-characteristic-polynomial}
  \det \bigl(\lambda\Id_{\mathcal V}-V(x)\bigr)
  = \frac{1}{\tau(x)} 
  \sum_{I\subseteq[n]} (-1)^{|I|}\lambda^{n-|I|}\tau(T_I x).
\end{equation}
\end{corollary}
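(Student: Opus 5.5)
Compatibility is a cocycle condition on the lattice generated by the shifts \(T_I\), and it will come from Proposition~\ref{prop:minor-shift} together with the composition law of Theorem~\ref{thm:Composition}. Write \(\Delta_I(x)\coloneqq\Delta_I(V(x))\). The system \eqref{eq:local-tau-function} is consistent precisely when two paths between the same lattice points give equal products of these factors.

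The basic case is two-step paths through disjoint sets. For disjoint \(I,J\) with \(\Delta_I(V)\Delta_{I\sqcup J}(V)\neq0\), flatness of the pair gives \(V(T_Ix)=T_IV(x)\). Then \eqref{eq:minor-shift} yields
\begin{equation*}
  \Delta_I\bigl(V(x)\bigr)\,\Delta_J\bigl(V(T_Ix)\bigr)=\Delta_{I\sqcup J}\bigl(V(x)\bigr).
\end{equation*}
So the path \(x\to T_Ix\to T_JT_Ix\) gives the same ratio as the direct step \(x\to T_{I\sqcup J}x\). This uses \(T_JT_Ix=T_{I\sqcup J}x\), which holds on parameters, and matches the composition law on \(V\).

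For general consistency, I note that every point reachable from \(x\) by the elementary shifts has the form \(T_{[n]}^{k}T_Ix\). Every step \(T_I\) factors into single-index steps \(T_{i_1}\cdots T_{i_m}\). By the identity above, defining \(\tau\) along any chain of single-index shifts is independent of the ordering: swapping adjacent indices \(i,j\) gives \(\Delta_i\Delta_j(T_iV)=\Delta_{\{i,j\}}=\Delta_j\Delta_i(T_jV)\), using Corollary~\ref{cor:flatness}. The commuting shifts \(T_i\) act freely on the parameters, so the reachable set is a \(\Z^n\)-torsor (locally). A multiplicative 1-cochain on \(\Z^n\) whose elementary squares close is a coboundary. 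Hence \(\tau\) exists, and it is unique up to the constant \(\tau(x_0)\) on each orbit. Inverse steps are covered by Corollary~\ref{cor:birationality}, which gives \(T_I^{-1}=T_{[n]}^{-1}T_{I^c}\).

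The main obstacle is to make ``locally'' precise, because the relations hold only where the relevant minors are nonzero. I would handle this by working on the generic locus where \(V(x)\) is principally nonsingular along the lattice orbit, so that every two-step path is defined and the square relations hold. The statement ``unique up to a nonzero constant'' is then understood orbitwise and on this open locus.

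The characteristic polynomial formula follows from the standard expansion
\begin{equation*}
  \det(\lambda\Id_{\mathcal V}-V)=\sum_{I\subseteq[n]}(-1)^{|I|}\lambda^{n-|I|}\Delta_I(V),
\end{equation*}
obtained by expanding the determinant by principal minors. Substituting \(\Delta_I(V(x))=\tau(T_Ix)/\tau(x)\) gives \eqref{eq:tau-characteristic-polynomial}; the \(I=\varnothing\) term is consistent with \(T_\varnothing=\mathrm{id}\).
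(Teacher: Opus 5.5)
Your proposal is correct and is essentially the paper's proof: the identity \(\Delta_I(V(x))\,\Delta_J(V(T_Ix))=\Delta_{I\sqcup J}(V(x))\) from Proposition~\ref{prop:minor-shift} plus flatness of the pair is the cocycle condition, the characteristic polynomial follows from the principal-minor expansion, and your reduction to single-index steps with the \(\Z^n\) square-closing argument only makes explicit what the paper leaves implicit. One stray remark is false and should be dropped: not every reachable point has the form \(T_{[n]}^{k}T_Ix\) (e.g.\ \(T_1T_1x\) for \(n\ge 2\)), but you never use it, and your subsequent \(\Z^n\)-torsor description is the correct one.
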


\begin{proof}
For disjoint \(I,J\subseteq[n]\), Proposition~\ref{prop:minor-shift} gives the following identity
\begin{equation*}\label{eq:minor-compatibility}
  \Delta_J(V(T_I x)) \Delta_I(V(x)) 
  = \Delta_{I\sqcup J}(V(x))
  = \Delta_I(V(T_J x)) \Delta_J(V(x)),
\end{equation*}
which is precisely the compatibility condition for \(T_J T_I\tau(x) = T_{I\sqcup J}\tau(x) = T_I T_J\tau(x) \). Therefore equation~\eqref{eq:local-tau-function} defines the function \(\tau(x)\) locally, unique up to a chosen initial value. Equation~\eqref{eq:tau-characteristic-polynomial} follows from the principal-minor expansion of characteristic polynomial.
\end{proof}

\subsection{Canonical solutions and isomonodromy}

Now we show that the \(T_I\)'s are indeed connection-preserving deformations. For simplicity, we assume the stong nonresonance conditions~\eqref{eq:strong-nonresonance} as in Theorem~\ref{thm:RHB-correspondence}.

\begin{theorem}[Isomonodromy]\label{thm:Isomonodromy}
Let \(A(z;x) \in E_{\alpha,\beta}(x)\) be a family of stongly non-resonant system coefficients parametrized by the position of \(x\), with canonical solutions of the form
\begin{equation}\label{eq:Y(z;x)}
  Y_0(z;x) = C_0(x)Y_0^*(z;x)z^{D(\beta)}, \qquad
  Y_\infty(z;x) = Y_\infty^*(z;x)\frac{z^{D(\alpha)}}{(z X^{-1};q)_\infty},
\end{equation}
as in Proposition~\ref{prop:canonical-solutions}. Then for \(I\subseteq [n]\), 
we have \(P(z;T_I x) = P(z;x)\) if and only if 
\begin{equation}\label{eq:T_I C_0}
  A(z;T_I x) = T_I A(z;x) \coloneqq A_I(qz;x)A(z;x)A_I(z;x)^{-1}, \qquad C_0(T_I x)=Q_I C_0(x).
\end{equation}
\end{theorem}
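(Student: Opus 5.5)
We prove both directions by comparing the canonical solutions at \(x\) and at \(T_I x\) through the gauge factor \(B(z)\coloneqq Y_\infty(z;T_I x)Y_\infty(z;x)^{-1}\). The two implications are handled separately.

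\emph{Sufficiency.} Assume \eqref{eq:T_I C_0}. We claim that \(A_I(z;x)Y_\infty(z;x)\) and \(A_I(z;x)Y_0(z;x)\) are the canonical solutions of the system \(Y(qz)=A(z;T_I x)Y(z)\). They are solutions by \eqref{eq:T_I A(z)}.

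At infinity, write \(A_I(z)=-zX^{-1}E_I+Q_I\). Since \((zX^{-1};q)_\infty\) and \((z\,D(T_Ix)^{-1};q)_\infty\) differ exactly by the factors \((1-z/x_i)\) for \(i\in I\), we have
\[
A_I(z)Y_\infty^*(z)\frac{z^{D(\alpha)}}{(zX^{-1};q)_\infty}
=\Bigl(A_I(z)Y_\infty^*(z)(\Id-zX^{-1}E_I)^{-1}\Bigr)\frac{z^{D(\alpha)}}{(zD(T_Ix)^{-1};q)_\infty}.
\]
The bracketed factor is a series in \(z^{-1}\). Its value at \(\infty\) is \(\lim_{z\to\infty}\bigl(-zX^{-1}E_I+Q_I\bigr)\bigl(\Id+O(z^{-1})\bigr)\bigl(\Id-zX^{-1}E_I\bigr)^{-1}\). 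This limit is \(\Id\) provided the \(O(z^{-1})\) term contributes nothing at order zero. That requires a short check against \eqref{eq:Y_infty^*}; alternatively one argues via uniqueness of the formal solution at \(\infty\) of \(A(z;T_Ix)\), whose leading data are \(q^{D(\alpha)}\) and \(D(T_I\kappa)\), and that leading structure \(q^{D(\alpha)}+zT_IA_1\) is what has to be matched.

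At zero, \(A_I(z)C_0Y_0^*(z)z^{D(\beta)}=Q_IC_0\bigl(C_0^{-1}Q_I^{-1}A_I(z)C_0Y_0^*(z)\bigr)z^{D(\beta)}\), and the bracket is a series in \(z\) equal to \(\Id\) at \(z=0\).

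By the uniqueness in Proposition~\ref{prop:canonical-solutions}, using \(C_0(T_Ix)=Q_IC_0(x)\), both canonical solutions at \(T_Ix\) are obtained by left multiplication by \(A_I\). Hence \(P\) is unchanged.

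\emph{Necessity.} Assume \(P(z;T_Ix)=P(z;x)\). Then
\[
B(z)=Y_\infty(z;T_Ix)Y_\infty(z;x)^{-1}=Y_0(z;T_Ix)Y_0(z;x)^{-1}.
\]
From the expression at zero, \(B\) is single-valued and meromorphic near \(0\), and holomorphic there after checking that \(z^{D(\beta)}\) cancels. From the expression at infinity, \(B\) is meromorphic at \(\infty\) with at most a simple pole, since the Pochhammer ratio contributes \(\prod_{i\in I}(1-z/x_i)\). On \(\C^*\), the possible poles come from \(Y_0^*(z;x)^{-1}\), which is holomorphic by the proof of Proposition~\ref{prop:birkhoff-entires}, and from \(Y_\infty\)-columns on the spirals \(x_jq^{\Z}\); these can be ruled out by using one representation on each side.

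So \(B\) is a polynomial of degree at most one, with leading term \(-zX^{-1}E_I\) read off at \(\infty\), and
\[
\det B=\det Y_\infty(T_Ix)/\det Y_\infty(x)=\prod_{i\in I}(1-z/x_i)
\]
by \eqref{eq:det Y(z)}. Moreover \(B(qz)A(z;x)=A(z;T_Ix)B(z)\).

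Exactly as in the proof of Theorem~\ref{thm:Composition}, the complementary factor \(A_{[n]}(z;x)B(z)^{-1}\) is shown to be polynomial of degree at most one. Then Propositions~\ref{prop:Riccati-factorization} and~\ref{prop:Sylvester-normalization} force \(B=A_I(\cdot;x)\), because the leading coefficient of \(A(z;T_Ix)\) is diagonal \(z(qK_1\oplus K_2)\). Evaluating \(Y_0\) at \(z\to0\) then gives \(C_0(T_Ix)=B(0)C_0(x)=Q_IC_0(x)\).

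\emph{Main obstacle.} The hardest step is the analytic control of the poles of \(B\) on \(\C^*\) and of its behaviour at \(\infty\). Here strong nonresonance is needed to exclude accidental cancellations along the \(q\)-spirals, and one must also verify the normalization \(Y_\infty^*(\infty)=\Id\) for the transformed solution.
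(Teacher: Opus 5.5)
\textbf{Sufficiency has a gap at \(z=\infty\).} The one nontrivial point of this direction is that \(\widetilde Y_\infty^*(z)\coloneqq A_I(z)Y_\infty^*(z)(\Id_{\mathcal V}-zX^{-1}E_I)^{-1}\) is normalized at \(\infty\). You leave this as ``a short check'', and your description of it is wrong: the \(z^{-1}\)-term of \(Y_\infty^*\) does contribute at order zero. Use \(Y_\infty^*(z;x)=\Id_{\mathcal V}+U_1z^{-1}+O(z^{-2})\), \((\Id_{\mathcal V}-zX^{-1}E_I)^{-1}=E_{I^c}-z^{-1}XE_I+O(z^{-2})\), and \((Q_I)_{12}=S_{12}\), \((Q_I)_{22}=\Id_{\mathcal V_2}\) from \eqref{eq:Q_I}. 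Then one finds
\[
  \widetilde Y_\infty^*(\infty)=
  \begin{pmatrix}
    \Id_{\mathcal V_1} & S_{12}-X_1^{-1}(U_1)_{12}\\
    0 & \Id_{\mathcal V_2}
  \end{pmatrix}.
\]
The off-diagonal block vanishes only for a specific reason. The \(z^0\)-coefficient of \eqref{eq:Y_infty^*} is \(U_1A_1-qA_1U_1=q\bigl(A_0-q^{D(\alpha)}\bigr)\). After left multiplication by \(X^{-1}\), it shows that \(X_1^{-1}(U_1)_{12}\) solves the same Sylvester equation \eqref{eq:Sylvester-equation} as \(S_{12}\), so the two coincide. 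This matching of the Sylvester normalization of \(A_I\) with the normalization of \(Y_\infty\) is the missing idea.

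Your ``leading structure'' remark can be turned into an equivalent argument. The \(z^1\)-coefficient of \(\widetilde Y_\infty^*(qz)\bigl(q^{D(\alpha)}+z(qK_1\oplus K_2)\bigr)=(T_IA)(z)\widetilde Y_\infty^*(z)\) shows that \(N=\widetilde Y_\infty^*(\infty)\) commutes with \(qK_1\oplus K_2\), so \(N\) is diagonal by strong nonresonance. But uniqueness of the normalized solution then only gives \(A_IY_\infty=Y_\infty(\,\cdot\,;T_Ix)N\), i.e.\ \(P\) is preserved up to \(\sim_{\mathrm{diag}}\). To get \(N=\Id_{\mathcal V}\), and hence \(P(z;T_Ix)=P(z;x)\), you still have to compute the diagonal blocks of \(N\) as above.

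\textbf{Necessity is correct, and differs from the paper in its second half.} Showing that \(B\) is a degree-one polynomial \(-zX^{-1}E_I+B(0)\) with \(\det B=\prod_{i\in I}(1-z/x_i)\) is done as in the paper. Your pole bookkeeping should be stated precisely, though. On the \(Y_0\)-side the candidate poles come from \(Y_0^*(z;T_Ix)\), since \(Y_0^*(z;x)^{-1}\) is holomorphic; they lie at \(x_jq^{-m}\) with \(m\ge1\) for \(j\in I\) and \(m\ge0\) for \(j\notin I\). On the \(Y_\infty\)-side they lie at \(x_jq^{m}\) with \(m\ge0\) for \(j\in I\) and \(m\ge1\) for \(j\notin I\).

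To identify \(B\) with \(A_I\), the paper sets \(C(z)=B(z)A_I(z)^{-1}\) and runs a Borodin-type argument: \(C(qz)=A(z;T_Ix)C(z)(T_IA)(z)^{-1}\), equal determinants, then Liouville. You instead show that the complementary factor \(q^{-D(\alpha)}A(z;x)B(z)^{-1}=q^{-D(\alpha)}B(qz)^{-1}A(z;T_Ix)\) is polynomial, since its two expressions have poles only in \(\{x_i\}_{i\in I}\) and \(\{q^{-1}x_i\}_{i\in I}\) respectively. You then invoke Propositions~\ref{prop:Riccati-factorization} and~\ref{prop:Sylvester-normalization}. This route is more algebraic and reuses the composition-law argument. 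It also yields \(\Delta_I(V)\neq0\) as a consequence, whereas the paper presupposes this when it writes \(A_I(z;x)\).

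One caveat: the uniqueness in Proposition~\ref{prop:Riccati-factorization} holds only after normalizing \(A_I^R(0)_{22}=\Id_{\mathcal V_2}\), which is the ``without loss of generality'' step in its proof. Indeed, \(B\mapsto\diag(\Id_{\mathcal V_1},G)B\), with \(G\) diagonal of determinant one, preserves the form and determinant of \(B\) and the diagonal leading coefficient of \(B(qz)A(z;x)B(z)^{-1}\). So you must also read off \(B(0)_{22}=\Id_{\mathcal V_2}\) from the constant term of \(Y_\infty^*(z;T_Ix)(\Id_{\mathcal V}-zX^{-1}E_I)Y_\infty^*(z;x)^{-1}\) at \(\infty\), which is immediate.
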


\begin{proof}
Let \(T_I Y(z) \coloneqq A_I(z) Y(z)\) for \(Y = Y_0,Y_\infty\) respectively, then by~\eqref{eq:T_I C_0} they are solutions of the deformed equation \(Y(q z) = T_I A(z) Y(z)\), and can be  written in the form
\begin{equation*}\label{eq:A_I Y(z)}
  A_I(z)Y_0(z) 
    = (T_I C_0)\widetilde Y_0^*(z)z^{D(\beta)}, \qquad 
  A_I(z)Y_\infty(z) 
    = \widetilde Y_\infty^*(z)
    \frac{z^{D(\alpha)}}{(z D(T_I x)^{-1};q)_\infty},
\end{equation*}
where \(\widetilde Y_0^*(z) = (T_I C_0)^{-1}A_I(z)C_0 Y_0^*(z)\) and \(\widetilde Y_\infty^*(z) = A_I(z)Y_\infty^*(z) (\Id_{\mathcal V}-z X^{-1}E_I)^{-1}\).

To show that \(T_I\) is connection preserving, it suffices to check that these are the canonical solutions normalized as in Proposition~\ref{prop:canonical-solutions}.
It is straightforward to see that \(\widetilde Y_0^*(z)\) is holomorphic at \(z=0\), with \(\widetilde Y_0^*(0) = (T_I C_0)^{-1}Q_I C_0\). Hence it satisfies the normalization~\eqref{eq:normalization} precisely when \(T_I C_0 = Q_I C_0\). For \(z=\infty\), expanding \(Y_\infty^*(z)=I+U_1 z^{-1}+O(z^{-1})\), \((\Id_{\mathcal V}-z X^{-1}E_I)^{-1}=\diag\{- z^{-1}X_1 + O(z^{-2}),\Id_{\mathcal V_2}\}\), and \(A_I(z)=S R_I(z)\) as in Proposition~\ref{prop:Sylvester-normalization} gives
\begin{equation*}\label{eq:A_I Y-normalization}
  \widetilde Y_\infty^*(z) 
  = \begin{pmatrix}
    \Id_{\mathcal V_1} & S_{12}-X_1^{-1}(U_1)_{12}\\
    0 & \Id_{\mathcal V_2}
  \end{pmatrix} + O\left(\frac{1}{z}\right).
\end{equation*}
Inserting \(Y_\infty^*(z)\) into~\eqref{eq:Y_infty^*} shows that \((X^{-1}U_1)A_1 - q A_1 (X^{-1}U_1) = q X^{-1}(A_0 - q^{D(\alpha)})\), thus \((X^{-1} U_1)_{12}=X_1^{-1}(U_1)_{12}\) coincides with \(S_{12}\) by~\eqref{eq:Sylvester-equation}. Therefore \(\widetilde Y_\infty^*(z)\) is holomorphic at \(z=\infty\), and has the required normalization \(\widetilde Y_\infty^*(\infty) = \Id_{\mathcal V}\). Consequently, \eqref{eq:T_I C_0} keeps both \(T_I Y_0\) and \(T_I Y_\infty\) canonical, and therefore preserves the connection matrix \(P(z;x)\).

Conversely, suppose \(P(z;T_I x) = P(z;x)\), then by \(P(z;x)=Y_0^{-1}(z;x)Y_\infty(z;x)\) as in~\eqref{eq:birkhoff-connection-matrix}, we have
\begin{equation*}
  B(z;x) 
  \coloneqq Y_\infty(z;T_I x) Y_\infty(z;x)^{-1}
  = Y_0(z;T_I x) Y_0(z;x)^{-1}.
\end{equation*}
Now write \(Y_\infty(z;x)=\bar Y_\infty^*(z;x)\frac{z^{D(\alpha)}(q;q)_\infty}{\theta_q(-z X^{-1})}\), then by equation~\eqref{eq:Y_infty^*}, we have \(\bar Y_\infty^*(z;x) = Y_\infty^*(z;x)(q X/z;q)_\infty\) and it satisfies 
\begin{equation*}\label{eq:Ybar}
  \bar Y_\infty^*(q z;x)\bigl(z A_1\bigr) 
  = \bigl(A_0+z A_1\bigr) \bar Y_\infty^*(z;x),
\end{equation*}
Therefore it is holomorphic on \(\C^*\), with its inverse \(\bar Y_\infty^*(z;x)^{-1}\) having at most simple poles contained in \(\{x_j q^m:m\ge 1\}\) for \(j\in [n]\). In particular, these are the only possible poles of
\begin{equation*}\label{eq:B(z;x)}
  B(z;x)
  = \bar Y_\infty^*(z;T_I x)\begin{pmatrix}
    - z X_1^{-1} & \\ & \Id_{\mathcal V_2}
  \end{pmatrix} \bar Y_\infty^*(z;x)^{-1}
  = C_0(T_I x)Y_0^*(z;T_I x) Y_0^*(z;x)^{-1}C_0(x)^{-1},
\end{equation*}
where the second identity shows that \(B(z;x)\) is holomorphic there. Consequently, \(B(z;x)\) is holomorphic on \(\C\), and therefore has the same form \(B(z;x) = - z X^{-1} E_I +B(0;x)\) and determinant as in~\eqref{eq:A_I(z)} by~\eqref{eq:normalization} and \eqref{eq:det Y(z)}. 
Now we put \(C(z)= B(z;x)A_I(z;x)^{-1}\), then it satisfies
\begin{equation}\label{eq:C(z)}
  C(q z) = \bigl(B[A](z)\bigr) C(z) \bigl(T_I A(z)\bigr)^{-1},
  \quad \text{where } B[A](z) \coloneqq B(q z;x)A(z;x)B(z;x)^{-1}.
\end{equation}
By inverting \(A_I(z) = S R_I (z)\) with~\eqref{eq:R_I-form}, one can verify that \(C(z)\) is holomorphic at both \(z=0,\infty\). This allows us to locate its possible poles, using the roots of either \(\det B[A](z)\) or \(\det T_I A(z)\). Since \(\det B[A](z) = \det T_I A(z)\), euqation~\eqref{eq:C(z)} forces \(C(z)\) to have no poles in \(\C^*\) under the stong nonresonance conditions, in the same spirit of Borodin's uniqueness proof in \cite{borodin2004isomonodromy}. Consequently \(C(z)\) is a constant matrix by Liouville's Theorem, hence equal to \(\Id_{\mathcal V}\) by Proposition~\ref{prop:Sylvester-normalization} and the normalization of \(Y_\infty(z;T_I x)\).
\end{proof}

\section{Continuous Limit}
\label{sec:5}

Our construction reduces in its \(q\to 1\) limit to the isomonodromic deformation of Jimbo, Miwa, and Ueno \cite{jimboMiwaUeno1981}. Put \(\varepsilon=q-1\), and let \(u=(u_1,\dots,u_n)\in(\C^*)^n\) have pairwise distinct entries. Consider a family of coefficients \(A(z;u,\varepsilon)\) such that
\begin{equation*}
  A_0 = \Id_{\mathcal V} + \varepsilon A, \qquad
  A_1 = \varepsilon U, \qquad
  U=D(u).
\end{equation*}
We suppress the dependence on \(u\) and \(\varepsilon\) and write \(T_i \coloneqq T_{\{i\}}\).

\begin{proposition}
\label{prop:formal-continuous-limit}
As \(\varepsilon\to 0\), the \(q\)-difference system and its deformation equations formally reduce to
\begin{equation}\label{eq:limiting-lax-pair}
  \frac{\partial Y}{\partial z} 
    = \left(U+\frac{A}{z}\right)Y, \qquad
  \frac{\partial Y}{\partial u_i} 
    = \left(z E_i+\ad_U^{-1}[E_i,A] \right)Y \quad 
  (i\in[n]).
\end{equation}
In particular, their compatibility condition gives
\begin{equation}\label{eq:limiting-isomonodromy-equation}
  \frac{\partial A}{\partial u_i} 
    = \left[ \ad_U^{-1}[E_i,A],A \right] \quad (i\in[n]),
\end{equation}
where \(\ad_U^{-1}\) denotes the inverse of \(\ad_U\) on the off-diagonal matrices.
\end{proposition}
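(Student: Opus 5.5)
The plan is to expand every ingredient of Theorem~\ref{thm:Construction} for \(I=\{i\}\) to first order in \(\varepsilon=q-1\), and then read off the two limiting equations.

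\textbf{The spectral equation.} Substituting \(A(z)=\Id_{\mathcal V}+\varepsilon(A+zU)\) and \(Y(qz)=Y(z)+\varepsilon zY'(z)+O(\varepsilon^2)\) into \eqref{eq:system} and dividing by \(\varepsilon\) gives \(zY'=(A+zU)Y\). This is the first equation of \eqref{eq:limiting-lax-pair}.

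\textbf{Scaling of the data.} Next I fix how the remaining parameters scale. From \eqref{eq:local-framing}, \(\kappa=\varepsilon u\) and \(q^{D(\alpha)}=-D(\kappa)X\), so \(x_i=-q^{\alpha_i}/(\varepsilon u_i)\). I take \(\alpha\) independent of \(\varepsilon\), so that \(q^{\alpha_i}=1+O(\varepsilon)\). Then \(x_i^{-1}=-\varepsilon u_i+O(\varepsilon^2)\), and the intermediate singularities escape to \(\infty\). The shift \(T_i\) sends \(\kappa_i\mapsto q\kappa_i\), that is \(u_i\mapsto u_i+\varepsilon u_i\). Hence \(T_iY=Y+\varepsilon u_i\,\partial_{u_i}Y+O(\varepsilon^2)\), and the deformation equation \(T_iY=A_i(z)Y\) becomes, after dividing by \(\varepsilon u_i\), an equation for \(\partial_{u_i}Y\). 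Since \(-zx_i^{-1}E_i=\varepsilon u_i zE_i+O(\varepsilon^2)\), the \(z\)-part of \(A_i(z)\) already produces the term \(zE_i\). It remains to show
\[
Q_i=\Id_{\mathcal V}+\varepsilon u_i\,\ad_U^{-1}[E_i,A]+O(\varepsilon^2).
\]

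\textbf{Expanding \(Q_i\).} I use \eqref{eq:Q_I} with \(\mathcal V_1=\C e_i\).
\begin{itemize}
\item Since \(V_{11}\) is a scalar, \(X_1^{-1}V_{11}X_1V_{11}^{-1}=1\).
\item The canonical frame \(V\) diagonalises
\[
M=-A_1^{-1}A_0=-\varepsilon^{-1}U^{-1}-U^{-1}A,
\]
which has the same eigenvectors as \(U^{-1}+\varepsilon U^{-1}A\). Because the \(u_j\) are pairwise distinct, first-order perturbation theory lets me normalise \(v_i=e_i+\varepsilon w_i+O(\varepsilon^2)\), with
\[
(w_i)_j=\frac{(U^{-1}A)_{ji}}{u_i^{-1}-u_j^{-1}}=\frac{u_iA_{ji}}{u_j-u_i}.
\]
So the lower block \(-V_{21}V_{11}^{-1}\) contributes \(\varepsilon u_iA_{ji}/(u_i-u_j)\) in entry \((j,i)\).
\item For the Sylvester equation \eqref{eq:Sylvester-equation}, I insert \(K_2=\varepsilon U_2\), \(qK_1=\varepsilon u_i(1+\varepsilon)\), \(X_1^{-1}\approx-\varepsilon u_i\) and \((A_0)_{12}=\varepsilon A_{i,j}\). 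Dividing by \(\varepsilon\) gives
\[
(S_{12})_j(u_j-u_i)=-\varepsilon u_iA_{ij}+O(\varepsilon^2),
\]
hence \((S_{12})_j=\varepsilon u_iA_{ij}/(u_i-u_j)\).
\item Multiplying the two unipotent factors, the cross term is \(O(\varepsilon^2)\). The diagonal of \(Q_i\) is therefore \(1+O(\varepsilon^2)\), and the only first-order entries sit in row \(i\) and column \(i\), with values \(u_iA_{ij}/(u_i-u_j)\) and \(u_iA_{ji}/(u_i-u_j)\).
\end{itemize}
These are exactly \(u_i\) times the entries of \(\ad_U^{-1}[E_i,A]\), because \([E_i,A]\) has \((i,j)\)-entry \(A_{ij}\) and \((j,i)\)-entry \(-A_{ji}\), and \(\ad_U\) multiplies entry \((a,b)\) by \(u_a-u_b\). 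This yields the second equation of \eqref{eq:limiting-lax-pair}. The matching of the first-order term of \(C_0(T_ix)=Q_iC_0(x)\) from Theorem~\ref{thm:Isomonodromy} is consistent with it.

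\textbf{Compatibility.} For \eqref{eq:limiting-isomonodromy-equation}, I cross-differentiate \(\partial_z\partial_{u_i}Y=\partial_{u_i}\partial_zY\). This gives
\[
\partial_{u_i}(U+A/z)-\partial_z\bigl(zE_i+\ad_U^{-1}[E_i,A]\bigr)+\bigl[U+A/z,\;zE_i+\ad_U^{-1}[E_i,A]\bigr]=0.
\]
I then compare coefficients of powers of \(z\).
\begin{itemize}
\item The \(z^1\) term vanishes since \([U,E_i]=0\).
\item The \(z^0\) term is \(E_i-E_i+[U,\ad_U^{-1}[E_i,A]]+[A,E_i]=0\).
\item The \(z^{-1}\) term gives \(\partial_{u_i}A=[\ad_U^{-1}[E_i,A],A]\).
\end{itemize}
Alternatively, the same equation is the \(\varepsilon\)-expansion of \(T_iA=A_i(qz)AA_i(z)^{-1}\).

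\textbf{Main obstacle.} I expect the delicate step to be justifying the uniform first-order expansion of \(V\) and of \(S_{12}\) while \(x\) escapes to \(\infty\). In particular, I must check that the gauge freedom in \(V\) (rescaling columns) does not contribute a diagonal \(O(\varepsilon)\) term to \(Q_i\). The structure of \eqref{eq:Q_I} handles this: the \((1,1)\)-block is exactly \(1\) and both factors are unipotent. I also need that the spectral assumptions hold for small \(\varepsilon\). This follows because \(q\kappa_i\neq\kappa_j\) and \(q^{-1}x_i\neq x_j\) for \(j\neq i\) reduce to \(u_i\neq u_j\) as \(\varepsilon\to0\).
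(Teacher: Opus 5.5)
Your proposal is correct and follows the paper's route. You expand the canonical frame $v_i$ to first order, insert it into the formula for $Q_I$ with $I=\{i\}$, and read off $\partial_{u_i}Y$ from $T_iY=A_i(z)Y$ under $u_i\mapsto qu_i$; you also solve the Sylvester equation for $S_{12}$ explicitly, which the paper leaves implicit. The only small difference is that the paper obtains $\partial A/\partial u_i$ by expanding $T_iA_0=Q_iA_0Q_i^{-1}$, whereas you cross-differentiate the limiting Lax pair, which is equivalent and matches the statement's wording.
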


\begin{proof}
The \(q\)-difference equation reads \(\frac{Y(qz)-Y(z)}{\varepsilon z} = \left(U+\frac{A}{z}\right)Y(z)\), yielding \(\frac{\partial Y}{\partial z}=\left(U+\frac{A}{z}\right)Y\) as its formal \(\varepsilon\to 0\) limit. 
The canonical pair \((V,X)\) of the system is determined by
\begin{equation*}
   V X V^{-1} = -A_1^{-1}A_0
   = - \varepsilon^{-1} U^{-1}(\Id_{\mathcal V}+\varepsilon A),
\end{equation*}
therefore by the standard perturbation theories, one can take the eigenpairs as
\begin{equation*}
  x_j^{-1} = - \varepsilon u_i  + O(\varepsilon^2), \quad
  v_{j} = e_j + \varepsilon \sum_{i\neq j} e_i \frac{u_j A_{ij}}{u_i-u_j} + O(\varepsilon^2), 
  \qquad \text{for } j\in [n].
\end{equation*}
Inserting them into \eqref{eq:Q_I} with \(I=\{i\}\) gives
\begin{equation*}\label{eq:limiting-coefficients}
  A_i(z) = \varepsilon u_i z E_i + Q_i, \quad 
  Q_i = \Id_{\mathcal V} + \varepsilon u_i\ad_U^{-1}[E_i,A] 
      + O(\varepsilon^2).
\end{equation*}
Since \(T_i u_i=q u_i=u_i+\varepsilon u_i\), comparing the \(\varepsilon^1\)-coefficient in \(T_i Y(z)=A_i(z)Y(z)\) and \(T_i A_0=Q_i A_0 Q_i^{-1}\) gives the equations for \(\partial Y/\partial u_i\) and \(\partial A/\partial u_i\), respectively.
\end{proof}

\section{Monodromy Problem}
Since the connection matrix \(P(z)\) has constant determinant, it put a strong constraint on the characteristic constants. A central problem is to determine these constants explicitly. Consider a flat family of system coefficients \(A^{(n)}(z,t)\) of the form \(A^{(n)}(z,t)=A^{(n)}_0(t) + z A^{(n)}_1(t)\), with 
\begin{equation*}
  A^{(n)}_0(t)=C^{(n)}_0(t)D(\lambda^{(n)})C^{(n)}_0(t)^{-1},
  \quad
  A^{(n)}_1(t)=\diag\{t A^{(n-1)}_1, \kappa_n\},
\end{equation*}
where \(\lambda^{(n)}=(\lambda^{(n)}_1,\dots,\lambda^{(n)}_n)\) and \(A^{(n-1)}_1=D(\kappa_1,\dots,\kappa_{n-1})\).
Such a family is typically characterized by its \(t\to 0\) boundary behavior. More precisely, given a expected boudnary value \(\Omega = (\Omega_{ij})\in \GL(\C^{n-1}\oplus\C)\), write \(\lambda^{(n-1)}=(\lambda^{(n-1)}_1,\dots,\lambda^{(n-1)}_{n-1})\) for the spectrum of \(\Omega_{11}\), one can expect under approciate spectrum condition on \(\lambda^{(n-1)}\) that there exists a unique solution \(A^{(n)}(z,t)\) to the flat deformation in \(t\), which has the prescribed limit under a gauge transformation:
\begin{equation*}
  \lim_{q^\Z\ni t\to 0}Q_\Omega(t)A^{(n)}(z,t)Q_\Omega(t)^{-1} = \kappa_n E_n z + \Omega,
  \quad \text{for } Q_\Omega(t)=\diag\{t^{\Omega_{11}}t^{D(\alpha_1,\dots,\alpha_{n-1})},1\}.
\end{equation*}
Consequently, its solutions \(Q_\Omega(t)Y_0(z,t)\) and \(Q_\Omega(t)Y_\infty(z,t)\) converges to a solution of the limiting system, which has \(q\)-Stokes phenomonon analized in \cite{linMaXu2024explicit}. By dianolizing the upper-left block \(\lambda^{(n-1)}=C_\Omega^{-1}\Omega_{11}C_\Omega\), we can view \(C_\Omega\in\GL(\C^{n-1})\subseteq\GL(\C^n)\) and write \(\Lambda_{ij}\) for the entries of \(\Lambda=C_\Omega^{-1}\Omega C_\Omega\). Then by analyzing the \(q\)-Stokes phenomonon and matching it with the poles of the solutions, it is conjectured the following, whose main difficulty in its proof lies in a uniform \(q\)-Borel-Laplace analysis of the degeneration process of \(Q_\Omega(t)Y_\infty(z,t)\) as \(t\to 0\), where the series itself becomes divergent and the \(q\)-Stokes phenomonon arises.
\begin{conjecture}
The characteristic constants of \(A^{(n)}(z,t)\) above are given by a recursion relation 
\begin{equation}
  p^{(n)}_{ij} = \begin{cases}
  \sum_{k = 1}^{n - 1} 
  \frac{\Lambda_{n k}}{\lambda_i^{(n)}} 
  \cdot
  \frac{
    \bigl(q\lambda_{\,\widehat{i}}^{(n)} / \lambda^{(n - 1)}_k, \lambda^{(n - 1)}_{\,\widehat{k}} /\lambda^{(n)}_i ; q\bigr)_{\infty}
    }{
    \bigl(q \lambda_{\,\widehat{k}}^{(n - 1)} / \lambda^{(n - 1)}_k, \lambda_{\,\widehat{i}}^{(n)} / \lambda^{(n)}_i ; q\bigr)_{\infty}
  } 
  \cdot
  \frac{
    \theta_q \bigl( - \kappa_j \lambda^{(n)}_i/ (\kappa_n \lambda^{(n - 1)}_k) \bigr)
    }{
    \theta_q \bigl( - \kappa_j/\kappa_n \bigr)
  } 
  \cdot p^{(n - 1)}_{k j}, 
    & j<n, \\[10pt]
  \frac{
      \bigl(\lambda^{(n - 1)} / \lambda^{(n)}_i ;q\bigr)_{\infty}
    }{
      \bigl(\lambda^{(n)}_{\,\widehat{i}} / \lambda^{(n)}_i ; q\bigr)_{\infty}
  }, & j=n.
  \end{cases}
\end{equation}
where \(p^{(n-1)}_{kj}\) are the characteristic constants for the system with \(A^{(n-1)}(z) = z A^{(n-1)}_1 + \Omega_{11}\).
Moreover, for generic \(P(z)\in F_{\alpha,\beta}(x)\), its characteristic constants can be given this way. 
\end{conjecture}

\section*{Appendix: An Explicit Rank-Two Example}
\label{Appendix}
It is known that for \(n=2\), the system~\eqref{eq:system} is solvable in terms of Heine's basic hypergeometric series \cite{manoAsymptoticBehaviourBoundary2010,ohyamaAnalyticSolutions2009}. Our construction is very explicit in this case. For \({\theta_1}/{\theta_2}, {\kappa_1}/{\kappa_2}, {x_1}/{x_2}\in \C^*\backslash q^{\mathbb Z}\), take \(q^{\alpha_i}= -\kappa_i x_i\), \(q^{\beta_i}=\theta_i\) and \(t = x_2/x_1\), with \(\alpha_1+\alpha_2=\beta_1+\beta_2\).
We define the function 
\begin{equation}\label{eq:tau(x1,x2)}
  \tau(x_1,x_2) 
  \coloneqq \tau_* \mathcal G_{\alpha_1}(x_1) \mathcal G_{\alpha_2}(x_2) 
  \frac{(q^{\alpha_1-\beta_1+1}t;q,q)_\infty
    (q^{\alpha_1-\beta_2+1}t;q,q)_\infty}
    {(qt;q,q)_\infty (q^{\alpha_1-\alpha_2+1}t;q,q)_\infty},
\end{equation}
where \(\mathcal G_\gamma(x) \coloneqq \exp 
\left(- \frac{\gamma(\log x)^2}{2\log q} - \frac{\gamma}{2}\log x\right)\), \((z;q,q)_\infty \coloneqq \prod_{r,s\geq0}(1-zq^{r+s})\) and \(\tau_*\in\mathbb C^*\). 
Write \(\tau =\tau(x)\), \(\tau_i =\tau(T_i x)\), and \(\tau_{12} =\tau(T_1 T_2 x)\), then by 
\(\mathcal G_\gamma(q^{-1} x)/\mathcal G_\gamma(x) = x^\gamma\) and \((q z;q,q)_{\infty}/(z;q,q)_{\infty}=1/(z;q)_{\infty}\), we have 
\begin{equation*}\label{eq:r(t)}
  r(t) \coloneqq \frac{\tau\tau_{12}}{\tau_1\tau_2}
  = \frac{(1-t)(1-q^{\alpha_1-\alpha_2} t)}{(1-q^{\alpha_1-\beta_1} t)(1-q^{\alpha_1-\beta_2} t)}.
\end{equation*}
Now we take \(A(z;x)\) as in~\eqref{eq:A(z)-parametrization} by choosing its canonical pair \(\bigl(V(x),X\bigr)\) with
\begin{equation*}\label{eq:(V(x),X)}
  V(x) = \frac{1}{\tau}
    \begin{pmatrix}
      \tau_1 & \tau_2 w(x)\\
      \tau_1\dfrac{1-r(t)}{w(x)} & \tau_2
    \end{pmatrix},
    \quad
    w(x) = \omega_* 
    \frac{
      (q^{\alpha_1-\alpha_2}t;q)_\infty
      (q^{\alpha_1-\alpha_2+1}t;q)_\infty
    }{
      (q^{\alpha_1-\beta_1}t;q)_\infty
      (q^{\alpha_1-\beta_2}t;q)_\infty
    } x_2^{\alpha_1-\alpha_2},
\end{equation*}
for some constant \(\omega_* \in \C^*\), then we have
\begin{equation}\label{eq:A(z;x)}
  A(z;x) = q^{D(\alpha)}X^{-1}
  \left[
    \frac1{r(t)}
    \begin{pmatrix}
      x_1-(1-r(t))x_2 & w(x)(x_2-x_1)\\
      \frac{1-r(t)}{w(x)}(x_1-x_2) & x_2-(1-r(t))x_1
    \end{pmatrix}
    -z\mathbf1_{\mathcal V}
  \right].
\end{equation}
One can verify that \(\bigl(V(x),D(x)\bigr)\) is a flat canonical pair for \(A(z;x)\in E_{\alpha,\beta}(x)\), with local tau function \(\tau(x)\).
Moreover, by setting \(q=1+\varepsilon\), \(x_i(\varepsilon) = - q^{\alpha_i}/(\varepsilon u_i)\), \(\omega_*(\varepsilon) = - \omega \varepsilon^{\alpha_1-\alpha_2+1}\), and writing \(d \coloneqq u_1-u_2\), \(A(z;x)\) formally reduces to a solution \(A(u)\) of \eqref{eq:limiting-isomonodromy-equation} via
\begin{equation*}
  A(z;x(\varepsilon)) = \mathbf1_{\mathcal V} + \varepsilon\bigl(Uz+A(u)\bigr) + O(\varepsilon^2), \quad
  A(u) = \begin{pmatrix}
    \alpha_1 & \omega d^{\alpha_2-\alpha_1}\\
    (\alpha_1\alpha_2-\beta_1\beta_2)\omega^{-1}d^{\alpha_1-\alpha_2} & \alpha_2
  \end{pmatrix}.
\end{equation*}

\section*{Acknowledgments}
The author is grateful to Xiaomeng Xu for comments and discussions, and appreciates the financial and institutional support from the Key Laboratory of Mathematics and Its Applications (Peking University), Ministry of Education.

\bibliographystyle{abbrv}
\bibliography{references}

\Addresses
\end{document}